\documentclass{fundam}

\usepackage{amsmath}
\usepackage{amssymb}
\usepackage{geometry}
\usepackage{fancyhdr}
\usepackage{amsfonts}
\usepackage{times}
\usepackage{subcaption}
\usepackage{todonotes}

\usepackage{tikz}
\usetikzlibrary{arrows,shapes,snakes,automata,backgrounds,petri,positioning}
\tikzset{>=latex}

\usepackage{graphicx}
\usepackage{algorithmicx}
\usepackage{algpseudocode}

\usepackage{scalerel}
\newcommand{\N}{\scaleto{N}{3.2pt}}
\newcommand{\C}{\scaleto{C}{3.2pt}}
\newcommand{\U}{\scaleto{U}{3.2pt}}
\renewcommand{\S}{\scaleto{S}{3.2pt}}

\newcommand{\1}{\scaleto{1}{3.2pt}}

\renewcommand{\phi}{\varphi}
\newcommand\Tau{\mathcal{T}}

\newcommand{\gdw}{\Leftrightarrow}
\newcommand\abs[1]{\vert #1 \vert}
\newcommand{\pre}[1]{\bullet #1}
\newcommand{\post}[1]{#1 \bullet}

\newcommand{\scopen}[1]{#1 \in \{1,\dots,n\}}

\begin{document}

\setcounter{page}{245}
\publyear{22}
\papernumber{2138}
\volume{187}
\issue{2-4}

 \finalVersionForARXIV

\title{Skeleton Abstraction for Universal Temporal Properties}

\author{Sophie Wallner\thanks{Address for correspondence: University of Rostock, 18051 Rostock, Germany},  Karsten Wolf \\
University of Rostock, Germany\\
sophie.wallner@uni-rostock.de
}

\runninghead{S. Wallner and K. Wolf}{Skeleton Abstraction for Universal Temporal Properties}

\maketitle

\begin{abstract}
		Uniform coloured Petri nets can be abstracted to their \emph{skeleton}, the place/transition net that simply turns the coloured tokens into black tokens.
		A coloured net and its skeleton are related by a \emph{net morphism} \cite{desel91,padberg98}.
		For the application of the skeleton as an abstraction method in the model checking process, we need to establish a \emph{simulation relation} \cite{milner89}
		between the state spaces of the two nets.
		Then, universal temporal properties (properties of the $ ACTL^* $ logic) are preserved.
		The abstraction relation induced by a net morphism is not necessarily a simulation relation, due to a subtle issue related to deadlocks \cite{findlow92}.
		We discuss several situations where the abstraction relation induced by a net morphism is as well a simulation relation, thus preserving $ACTL^*$ properties.
		We further propose a partition refinement algorithm for folding a place/transition net into a coloured net. This way, skeleton abstraction becomes available for models given as place/transition nets.
		Experiments demonstrate the capabilities of the proposed technology. Using skeleton abstraction, we are capable of solving problems that have not been solved before in the Model Checking Contest \cite{mcc19}.
\end{abstract}

\section{Introduction} \label{sec:intro}

In the model checking process for coloured Petri nets, one of the biggest issues is the state explosion problem, which makes the verification of a property impossible, as the state space is getting too big to handle.
A way to deal with these big systems, is the well known technique of abstraction.
Given a coloured Petri net $ C $, we can form its \emph{skeleton} $ S $, which has the structure of $ C $ and simply decolours its components and tokens.
This skeleton is an abstraction of the coloured net.
To use this abstraction technique in the model checking process, we need to guarantee, that properties are preserved through this abstraction, i.e. that the validity of a property in $ S $ indicates the validity of the property in $ C $.
Unfortunately, this is not the case for every coloured net.
The issue is that some deadlocks of $ C $ are not preserved in $ S $, as the additional behaviour of $ S $ changes the validity of the property.
Deadlocks in a coloured net can have two different causes.
First, they can be caused by an insufficient number of tokens in the preset of a transition.
These deadlocks are preserved in the skeleton, as the number of tokens will neither be sufficient in the skeleton.
Second, they can be caused by a wrong colour set of tokens, as the number of tokens in the preset of a transition is sufficient, but the colour distribution of the tokens violates the guard of the transition.
This type of deadlocks is usually not preserved in the skeleton, as the skeleton does not distinguish colors at all.
Consider the following example:

\begin{example}
	\label{ex:1}
	Let $ C $ be a coloured Petri net, for which we build its skeleton $ S $ by removing the colour sets of the places, the guard of the transition and making the tokens all indistinguishable.
	The two nets are pictured in Figure~\ref{fig:blocked}.
	We consider the $ ACTL^* $ formula $ \phi: \mathbf{A} \mathbf{F} \, p \leq 1 $.
	The guard of the only transition $ t $ expresses that $ t $ requires three tokens of the same colour to be activated and then produces one token of this colour.
	In the given marking of $ C $, $ t $ is not enabled, so this marking is a deadlock.
	The corresponding marking of $ S $ is not a deadlock, as the number of tokens is sufficient and $ t $ is activated.
	Firing $ t $ in $ S $ leads to a marking, where all tokens are removed from $ p $, so $ \phi $ is true for $ S $.
	However, $ \phi $ is not true for $ C $.
	Transferring the validity of $ \phi $ from $ S $ to $ C $ will draw a wrong conclusion.

\begin{figure}[h]
	\begin{subfigure}{0.5\textwidth}
		\begin{center}
			\begin{tikzpicture}[auto,token/.style={shape=circle,draw=black,fill=white,minimum size=2.5mm,inner sep=0.5pt}]
			\node[place,label=above:$p$]	(p1) {};
			\node[token] at (0,0.21) {\tiny r};
			\node[token] at(-0.16,-0.1) {\tiny g};
			\node[token] at(0.16,-0.1) {\tiny g};
			\node[place,label=above:$q$] at(5,0)	(p2){};
			\node[transition, label=above:$t$] at(2.5,0) (t){}
				edge[pre] node[swap]{$ x_1 \: x_2 \: x_3 $} (p1)
				edge[post] node{$ y $} (p2);

			\node at (0,-0.7) (g) {$\scriptstyle{\chi(p) = \{r,g\}}$};
			\node at (5,-0.7) (g) {$\scriptstyle{\chi(q) = \{r,g\}}$};

			\node at (2.5,-0.5) (g) {$\scriptstyle{\gamma(t) = \bigvee_{c \in \{r,g\}}}$};
			\node at (2.5,-0.9) (g) {$\scriptstyle{x_1 = x_2 = x_3 = c}$};
			\node at (2.5,-1.3) (g) {$\scriptstyle{ \land \, y = c}$};
			\end{tikzpicture}
		\subcaption{A coloured net $ C $.}
		\end{center}
	\end{subfigure}
	\begin{subfigure}{0.5\textwidth}
		\begin{center}
			\begin{tikzpicture}[auto]
			\node[place,label=above:$p$,tokens=3]	(p1) {};
			\node[place,label=above:$q$] at(5,0)	(p2){};
			\node[transition, label=above:$t$] at(2.5,0) (t){}
				edge[pre] node[swap]{$ 3 $} (p1)
				edge[post] node{$ 1 $} (p2);
			\node at (2,-0-.5) (g) {\textcolor{white}{$\scriptscriptstyle{\bigvee_{i \in \{g,r\}} x_1 = x_2 = x_3 = i \land y = i}$}};
		\end{tikzpicture}
		\subcaption{The skeleton $ S $ of $ C $.}
		\end{center}
	\end{subfigure}\vspace*{-1mm}
	\caption{A coloured net $ C $ with a deadlock not preserved in its skeleton $ S $.}
	\label{fig:blocked}\vspace*{-2mm}
\end{figure}
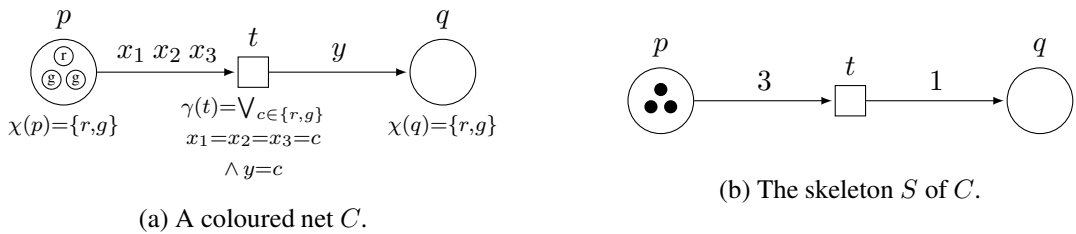
\end{example}

This paper is an improved version of the conference paper \cite{ourskeleton} and gives a detailed analysis of situations, where the skeleton as an abstraction technique is soundly applicable in the model checking process for coloured Petri nets.
With respect to \cite{ourskeleton}, we improved the folding algorithm in Section~\ref{subsec:fold} in a way, that it is now more liberal, i.e. the resulting coloured nets are smaller.
The efficiency of this new folding algorithm is examined through new experiments.
Overall, the more liberal folding algorithm helps to generate more results.
We as well formalized the folding algorithm in Section~\ref{subsec:fold} and added some explanation for this algorithm to make the construction more understandable.
Furthermore, we extended the description of the automaton approach to check whether a transition class is full or not.
We therefore introduced more formal definitions of the arc inscriptions and guards.
This topic is now covered in an extra section (Section~\ref{sec:fullnesscheck}) in which we describe how to generate an automaton for terms resp. guard expressions and finally a transition class.
This will make the process of checking fullness more understandable.
Based on our new folding algorithm, Section~\ref{sec:experiments} describes updated experimental results.
The paper is structured as follows:
The next Section~\ref{sec:related} will give an overview of the application of skeleton nets in other contexts, Section~\ref{sec:basic} provides necessary basic definitions.
After that, Section~\ref{sec:relations} will introduce the different concepts of relations, which can hold between reachability graphs of Petri nets.
Section~\ref{sec:sim} will set the focus on the simulation relation between reachability graphs as the core concept for keeping validity through abstraction.
We present a survey, in which cases the skeleton is a valid abstraction method for a coloured Petri net, distinguishing different classes of nets and types of formulas.
Section~\ref{sec:fullnesscheck} provides an approach to check whether a coloured net has a deadlock-preserving skeleton; a property a net might fulfill to make the skeleton abstraction a valid verification extension.
With the folding algorithm in Section~\ref{sec:ext}, we extend the scope of application of the skeleton abstraction to place/transition nets.
The experimental results in Section~\ref{sec:experiments} underline the powerfullness of this abstraction method.

\section{Related work} \label{sec:related}

The idea of a skeleton-based analysis of a Petri net is subject of \cite{vautherin87}.
Based on this, \cite{findlow92} examines the role of deadlocks within this topic more precisely.
The results are also applied in other contexts.
In \cite{rusttackenboeke01} extended Pr/T-Nets are used as a modeling formalism for embedded real-time systems due to the multitude of analysis methods for Petri nets. With the skeleton of a Pr/T-Net, properties like reachability of states or deadlock freeness can be examined.
\cite{lilius95} transfers Findlow's results \cite{findlow92} to algebraic nets. They are used as an application for a folding construction, which is described there.
Findlow's observations on deadlock-preserving skeletons are further used in \cite{silva99} for a skeleton-based analysis of G-Nets, an object-based Petri net formalism.
The preservation of predicates in temporal logic under morphisms has also already been discussed.
\cite{padberg98} describes a rule-based modification of algebraic high-level nets extended with morphisms such that safety properties described in temporal logic are preserved.This provides a technique which allows to transfer safety properties between the source and the target net.

\section{Basic definitions} \label{sec:basic}

First, we present definitions for place/transition nets.

\begin{definition}[Place/Transition Net]
	A \emph{place/transition net} (P/T net) is a tuple $ N = [P,T,F,W,m_0] $, where $ P $ is a finite set of \emph{places} and $ T $ is a finite set of \emph{transitions} with $ P \cap T = \emptyset $.
	The \emph{arcs} $ F \subseteq (P \times T) \cup (T \times P) $ of the net are labeled by a \emph{weight function} $ W: F \rightarrow \mathbb{N} $, with $ W(x,y) = 0 $ iff $ (x,y) \notin F $.
	A marking is a mapping $ m : P \rightarrow \mathbb{N} $ and $ m_0 $ is the \emph{initial marking}.
\end{definition}

The behavior of a P/T net is defined by the transition rule.

\begin{definition}[Transition rule of a P/T net]
	Let $ N = [P,T,F,W,m_0] $ be a P/T net.
	Transition $ t \in T $ is enabled in marking $ m $ if $ \forall p \in P: W(p,t) \leq m(p)$.
	Firing Transition $ t $ leads from marking $ m $ to marking $ m' $ (denoted as $ m \xrightarrow{t} m' $) in $ N $ , if $ t $ is enabled in $ m $ and
	 $ \forall p: m'(p) = m(p) - W(p,t) + W(t,p) $.
\end{definition}

A marking $ m' $ is \emph{reachable} from a marking $ m $ (denoted as $ m \xrightarrow{*} m'$), if there is a firing sequence $ t_1t_2\dots t_n \in T^*$, such that $ m \xrightarrow{t_1} m_1 \xrightarrow{t_2} \dots \xrightarrow{t_n} m'$.
We extend the notation of reachability to firing sequences $\omega \in T^*$ and we call $RS(N)=\{m \mid \exists \omega \in T^*: m_0\xrightarrow{\omega}m\}$ the \emph{reachability set}, which contains all of $N$'s reachable markings.
Using the transition rule, a Petri net induces a labeled transition system, called the \emph{reachability graph}.

\begin{definition}[Labeled Transition System, Reachability Graph]
	A \emph{transition system} $ TS = [Q,q_o,R,A] $ is a labeled, directed graph, where $ Q $ is the set of \emph{states}, $ q_0 \in Q $ is the initial state and a transition relation $ R \subseteq Q \times A \times Q $ with some set of actions $ A $.
	The \emph{reachability graph} $ R_N(m_0) $ of a Petri net $ N $ is a transition system, where the set of states is $ RS(N) $, $ m_0 $ serves as the initial state and $ (m,t,m') \in R $ iff $ m \xrightarrow{t} m'$.
\end{definition}

Furthermore, we introduce a simple notion for coloured Petri nets with finite colour domains.

\begin{definition}[Coloured Petri net]
	A \emph{coloured Petri net} $ C = [P_{\C},T_{\C},F_{\C},W_{\C},\chi,\gamma,m_{0\C}] $ consists of a finite set $ P_{\C} $ of \emph{places}, a finite set $ T_{\C} $ of \emph{transitions} where $ P_{\C} \cap T_{\C} = \emptyset$ and a set of \emph{arcs} $ F_{\C} \subseteq (P_{\C} \times T_{\C}) \cup (T_{\C} \times P_{\C})$.
	The \emph{weight function} $ W_{\C} $ assigns a finite set of variables to each element of $ F_{\C} $.
	If $ (x,y) \notin F_{\C} $, we assume $ W_{\C}(x,y) = \emptyset $.
	The \emph{colouring function} $ \chi $ assigns a finite set $ \chi(p) $ of colours to each place $ p \in P_{\C} $, called colour domain of $ p $.
	The \emph{guard function} $ \gamma $ assigns a boolean predicate $ \gamma(t) $ to each transition $ t \in T_{\C} $, which ranges over the variables of $ W_{\C}(p,t) \cup W_{\C}(t,p) $ for all $ p \in P_{\C} $.
	The initial marking $ m_0 $ is a multiset over $ \chi(p) $ for every $ p \in P_{\C}$.
	The number of tokens of colour $ c $ on place $ p $ in marking $ m $ is described as $ m(p)(c) $.
\end{definition}

For a transition $ t \in T_{\C}$, we define a \emph{firing mode} of $ t $ as a mapping $ g: \bigcup_{p \in P_{\C}} (W_{\C}(p,t) \cup W_{\C}(t,p)) \rightarrow \bigcup_{p \in P_{\C}} \chi(p) $, which assigns a colour from $ \chi(p) $ for every place $ p \in P_{\C}$ and for each variable $ x \in W_{\C}(p,t) \cup W_{\C}(t,p) $.
A firing mode $ g $ of a transition $ t $ satisfies the guard $ \gamma(t) $, denoted as $ g \models \gamma(t) $, if the assignment of colours to variables is a model of the guard.

Usually, definitions of coloured nets permit a richer syntax for arc weights and provide a more detailed description of the guard.
In Chapter \ref{sec:fullnesscheck} we give more specific definitions for arc weights and guards; furthermore we present a solution how to simplify arc weights to variables without undermining expressivity.
Until then, consider arc weights and guards as defined in the definition above.

For a coloured net, we define its unfolding.

\begin{definition}[Unfolding]
	Let $ C = [P_{\C},T_{\C},F_{\C},W_{\C},\chi,\gamma,m_{0\C}] $ be a coloured Petri net.
	A P/T net $ U = [P_{\U},T_{\U},F_{\U},W_{\U},m_{0\U}] $ is the \emph{unfolding} of $ C $ if
	\begin{itemize}
\itemsep=0.9pt
		\item $ P_{\U} = \{ [p,c] \mid p \in P_{\C}, c \in \chi(p)\} $
		\item $ T_{\U} =  \{ [t,g] \mid t \in T_{\C}, g \models \gamma(t)\} $
		\item $ ([p,c], [t,g]) \in F_{\U}$, iff $ (p,t) \in F_{\C}$ and $c \in g(W_{\C}(p,t)) $
		\item $ ([t,g], [p,c]) \in F_{\U}$, iff $ (t,p) \in F_{\C}$ and $c \in g(W_{\C}(t,p)) $
		\item $ W_{\U}([p,c],[t,g]) = \operatorname{card}(\{x \mid x \in W_{\C}(p,t), g(x) = c\}) $
		\item $ W_{\U}([t,g],[p,c]) = \operatorname{card}(\{x \mid x \in W_{\C}(t,p), g(x) = c\}) $
		\item $ m_{0\U}([p,c]) = m_{0\C}(p)(c)$.
	\end{itemize}
\end{definition}

In the sequel, refer to the transition system defined by a coloured net $ C $ as the transition system of its unfolding $ U $, as they are isomorphic \cite{jensen09}.
Coloured nets as defined above are \emph{uniform}.
This means that the number of tokens consumed or produced by a transition is independent of the particular firing mode, i.e. always $\operatorname{card}(W(x,y))$ tokens.
There exist non-uniform variants of coloured nets.
They use variables that take multisets over $ \chi(p) $ as values.
They are, however, out of the scope of this article since the core artifact studied in this paper, the skeleton, is not applicable to non-uniform nets.
For a uniform net, we can assign a second P/T net, its skeleton.

\begin{definition}[Skeleton]
	Let $ C = [P_{\C},T_{\C},F_{\C},W_{\C},\chi,\gamma,m_{0\C}]$ be a coloured net.
	Its \emph{skeleton} $ S = [P_{\S},T_{\S},F_{\S},W_{\S},m_{0\S}]$ is a P/T net where
	\begin{itemize}
\itemsep=0.95pt
	\item $P_{\S} = P_{\C}$, $T_{\S} = T_{\C}$, $F_{\S} = F_{\C}$
	\item for all $x,y \in P \cup T : W_{\S}(x,y) = \operatorname{card}(W_{\C}(x,y)) $
	\item for all $p \in P: m_{0\S}(p) = \sum_{c \in \chi(p)} m_0(p)(c)$.
	\end{itemize}
\end{definition}

The following example will help to understand the concepts of the unfolding and the skeleton of a coloured net.
\begin{sloppypar}
	\begin{example}
	Let $ C $ be the given coloured Petri net, as depicted in Figure~\ref{fig:faltungentfaltung}.
	Place $ p $ and $ q $ have the colour domain $ \chi(p) = \chi(q) =\{r,g,b\}$.
	Unfolding $ C $ leads to the corresponding places $ [p,g],[p,r],[p,b] $ resp. $ [q,g],[q,r],[q,b] $.
	For every firing mode, which satisfies the guard of transition $ t $, we introduce one transition in the unfolding, so the unfolding has three transitions $ t_g,t_r,t_b$.
	Building the skeleton makes all tokens on $ p $ indistinguishable and removes the colour sets of $ p $ and $ q $.
	The transition $ t $ in the skeleton has no guard and is simply activated, if there is a sufficient number of tokens on $ p $.
	
\begin{figure}[h]
		\resizebox{\textwidth}{!}{
			\begin{subfigure}{0.42\textwidth}
				\begin{center}
					\begin{tikzpicture}
					\node[place,tokens=1]	at(0,0) (pg1) {};
						\node at (0,0.7) (lpg1) {$ [p,g] $};
					\node[place,tokens=1]	at(1,0) (pr1) {};
						\node at (1,0.7) (lpg1) {$ [p,r] $};
					\node[place,tokens=1]	at(2,0) (pb1) {};
						\node at (2,0.7) (lpg1) {$ [p,b] $};

					\node[place]	at(0,-3) (pg2) {};
						\node at (0,-3.7) (lpg1) {$ [q,g] $};
					\node[place]	at(1,-3) (pr2) {};
						\node at (1,-3.7) (lpg1) {$ [q,r] $};
					\node[place]	at(2,-3) (pb2) {};
						\node at (2,-3.7) (lpg1) {$ [q,b] $};

					\node[transition] at (0,-1.5) (tg) {$ t_g $}
						edge[pre] (pg1)
						edge[post] (pg2);
					\node[transition] at (1,-1.5) (tr) {$ t_r $}
						edge[pre] (pr1)
						edge[post] (pr2);
					\node[transition] at (2,-1.5) (tb) {$ t_b $}
						edge[pre] (pb1)
						edge[post] (pb2);
				\end{tikzpicture}
				\subcaption{The Unfolding $ U $.}
				\end{center}
			\end{subfigure}
			\begin{subfigure}{0.3\textwidth}
				\begin{center}
					\begin{tikzpicture}[wtoken/.style={shape=circle,draw=black,fill=white,minimum size=2.5mm,inner sep=0.5pt}]
					\node[place]	at(0,0) (pg1) {};
						\node at (0,0.7) (lpg1) {$ p $};
						\node[wtoken] at (0,0.21) {\tiny r };
						\node[wtoken] at (0.16,-0.1) {\tiny b };
						\node[wtoken] at (-0.16,-0.1) {\tiny g };
						\node at (1.5,0) {$ \scriptstyle \chi(p) = \{r,g,b\}$};

					\node[place]	at(0,-3) (pg2) {};
						\node at (0,-3.7) (lpg1) {$ q $};
						\node at (1.5,-3) {$ \scriptstyle \chi(q) = \{r,g,b\}$};

					\node[transition] at (0,-1.5) (tg) {$ t $}
						edge[pre] (pg1)
						edge[post] (pg2);
						\node at (2.3,-1.5) {$ \scriptstyle \gamma(t) = \bigvee_{c \in \{r,g,b\}} x_1 = c \, \land \, x_2 = c$};
						\node at(0.3,-0.8) (xpt) {$\scriptstyle x_1 $};
						\node at(0.3,-2.1) (xtq) {$\scriptstyle x_2 $};
				\end{tikzpicture}
				\subcaption{The Coloured Net $ C$.}
				\end{center}
			\end{subfigure}
			\begin{subfigure}{0.3\textwidth}
				\begin{center}
					\begin{tikzpicture}
					\node[place,tokens=3]	at(0,0) (pg1) {};
						\node at (0,0.7) (lpg1) {$ p $};

					\node[place]	at(0,-3) (pg2) {};
						\node at (0,-3.7) (lpg1) {$ q $};

					\node[transition] at (0,-1.5) (tg) {$ t $}
						edge[pre] (pg1)
						edge[post] (pg2);
				\end{tikzpicture}
				\subcaption{The Skeleton $ S $.}
				\end{center}
			\end{subfigure}
	}
	\caption{A coloured Petri net $ C $, its unfolding $ U $ and its skeleton net $ S $.}
	\label{fig:faltungentfaltung}
\end{figure}
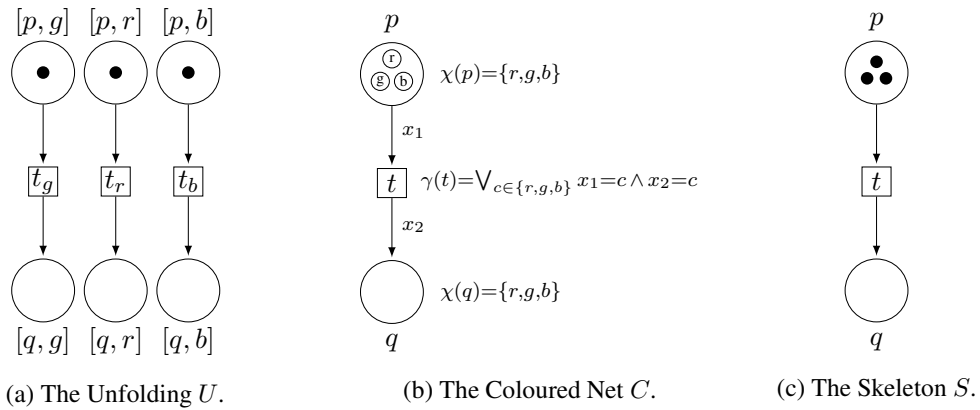

	\end{example}
\end{sloppypar}

In the sequel, unless stated otherwise, let $ C $ be an arbitrary but fixed coloured net, $ U $ its unfolding, and $ S $ its skeleton.
$ U $ and $ S $ are related by a net morphism.

\begin{definition}[Net Morphism \cite{desel91}] \label{def:morphism}
Let $N_1 = [P_1,T_1,F_1,W_1,m_{01}]$ and $N_2= [P_2,T_2,F_2,W_2,m_{02}]$ be arbitary P/T nets.
A \emph{net morphism} from $N_1$ to $N_2$ is a mapping $\mu: (P_1 \cup T_1) \rightarrow (P_2 \cup T_2)$ such that
$\mu(P_1) \subseteq P_2$, $\mu(T_1) \subseteq T_2$ and  $\forall x,y \in P_1 \cup T_1: W(\mu(x),\mu(y)) = W(x,y) $.
For the initial markings, it holds that $ \forall p_2 \in P_2 : m_{02}(p_2) = \sum_{p_1 \in P_1: (p_1,p_2) \in \mu} m_{01}(p_1)$.
\end{definition}

A net morphism can be extended to a mapping from markings of $ N_1 $ to markings of $ N_2 $ by setting $ m_2(p_2) = \sum_{p_1 \in P_1: (p_1,p_2) \in \mu} m_1(p_1)$ for all $ p_2 \in P_2 $, where $ m_1 \in RS(N_1) $ and $ m_2 \in RS(N_2) $.
A net morphism preserves the reachability between the related nets.

\begin{lemma}[Net Morphism preserves reachability \cite{pinna11}]  \label{prop:morphism}
	Let $ N_1,N_2 $ be two P/T nets, related by a net morphism $ \mu $.
	The transition $ m \xrightarrow{t} m'$ in $N_1$ implies the transition $ \mu(m) \xrightarrow{\mu(t)} \mu(m') $ in $ N_2 $.
\end{lemma}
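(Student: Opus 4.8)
The plan is to prove this by unfolding both definitions and checking that the defining equations of $N_2$'s transition rule are satisfied by $\mu(m)$, $\mu(t)$, and $\mu(m')$. Recall from the definition of a net morphism that $\mu$ preserves weights, i.e.\ $W_2(\mu(x),\mu(y)) = W_1(x,y)$ for all $x,y \in P_1 \cup T_1$, and that the induced map on markings is $\mu(m)(p_2) = \sum_{p_1 \in P_1:\,(p_1,p_2)\in\mu} m(p_1)$. So the statement to prove has really just two parts: (i) if $t$ is enabled in $m$, then $\mu(t)$ is enabled in $\mu(m)$; and (ii) the marking resulting from firing $\mu(t)$ in $\mu(m)$ is exactly $\mu(m')$.

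For part (i), I would fix a place $p_2 \in P_2$ and show $W_2(p_2,\mu(t)) \le \mu(m)(p_2)$. The idea is to sum the enabledness inequality $W_1(p_1,t) \le m(p_1)$ over all $p_1 \in \mu^{-1}(p_2)$. Summing the right-hand sides gives $\mu(m)(p_2)$ by the definition of the induced marking map. For the left-hand side, one needs $W_2(p_2,\mu(t)) = \sum_{p_1 \in \mu^{-1}(p_2)} W_1(p_1,t)$; this is where the morphism's weight-preservation property is used, together with the (implicit, and worth making explicit) assumption that among the preimages $p_1$ of $p_2$, only one actually has an arc to $t$ — or more carefully, that the weights behave additively under $\mu$. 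Actually the cleaner route, and the one I'd take, is to recall that in this setting the relevant nets are $U$ and $S$, where each place $p$ of $S$ has preimages $[p,c]$ for $c \in \chi(p)$, and $W_S(p,t) = \operatorname{card}(W_C(p,t)) = \sum_{c}\operatorname{card}(\{x \in W_C(p,t) : g(x)=c\}) = \sum_c W_U([p,c],[t,g])$; so the summation identity holds. I would note that the lemma is cited from \cite{pinna11} and cast as a general fact about net morphisms, so the general additivity is part of the morphism definition's consequences; the small gap is just bookkeeping.

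For part (ii), I would again fix $p_2 \in P_2$ and compute $\mu(m')(p_2) = \sum_{p_1 \in \mu^{-1}(p_2)} m'(p_1) = \sum_{p_1} \bigl(m(p_1) - W_1(p_1,t) + W_1(t,p_1)\bigr)$, using the firing rule in $N_1$. Splitting the sum into three pieces and using the marking-map definition plus the weight-summation identities from part (i) (now also for output arcs, $W_2(\mu(t),p_2) = \sum_{p_1} W_1(t,p_1)$), this equals $\mu(m)(p_2) - W_2(p_2,\mu(t)) + W_2(\mu(t),p_2)$, which is precisely the value prescribed by the firing rule for $\mu(m) \xrightarrow{\mu(t)} \mu(m')$ in $N_2$. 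Combined with enabledness from part (i), this establishes the transition $\mu(m) \xrightarrow{\mu(t)} \mu(m')$.

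The main obstacle — really the only subtle point — is establishing the weight-summation identity $W_2(p_2, \mu(t)) = \sum_{p_1 \in \mu^{-1}(p_2)} W_1(p_1, t)$ from the bare morphism axiom $W_2(\mu(x),\mu(y)) = W_1(x,y)$, since a priori the axiom only equates individual weights, not a weight with a sum of weights. The resolution is that the net morphism between $U$ and $S$ (the only case we need) collapses the coloured copies $[p,c]$ of a place into $p$ and the firing modes $[t,g]$ into $t$, and the skeleton's weight function is \emph{defined} as the cardinality $\operatorname{card}(W_C(x,y))$, which decomposes exactly as the sum of the unfolding's weights over colours. So I would either (a) restrict the proof to the $U$--$S$ morphism and invoke the skeleton definition directly, or (b) observe that this additive compatibility is in fact built into the morphism condition on initial markings and extends to all reachable markings by the structure of $\mu$; in practice, citing \cite{pinna11} for the general statement and giving the $U$--$S$ verification suffices for everything the paper needs downstream.
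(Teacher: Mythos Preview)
The paper does not prove this lemma; it is stated with a citation to \cite{pinna11} and used as a black box. So there is no paper proof to compare against, and your write-up already goes further than the paper does.

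That said, the subtle point you flag is real and is not merely bookkeeping. From the morphism axiom $W_2(\mu(x),\mu(y)) = W_1(x,y)$ alone, one gets that \emph{every} $p_1 \in \mu^{-1}(p_2)$ satisfies $W_1(p_1,t) = W_2(p_2,\mu(t))$; hence $\sum_{p_1 \in \mu^{-1}(p_2)} W_1(p_1,t) = |\mu^{-1}(p_2)|\cdot W_2(p_2,\mu(t))$, not $W_2(p_2,\mu(t))$. This is enough for enabledness (pick any single preimage $p_1$ and use $W_2(p_2,\mu(t)) = W_1(p_1,t) \le m(p_1) \le \mu(m)(p_2)$), but it breaks your part~(ii) computation in general. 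In other words, with the paper's Definition~\ref{def:morphism} taken literally, the lemma does not follow for arbitrary morphisms by the argument you sketch; one needs the additive weight condition $W_2(p_2,\mu(t)) = \sum_{p_1 \in \mu^{-1}(p_2)} W_1(p_1,t)$, which is how net morphisms are usually axiomatised in the cited literature but is not what is written here. Your option~(a), restricting to the $U$--$S$ morphism and verifying the additive identity directly from the skeleton and unfolding definitions, is the honest fix and is exactly what the paper needs downstream; your option~(b) does not actually recover the identity from the stated axioms.
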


It is easy to see that $ U $ and $ S $ are related by a net morphism.

\begin{lemma}[Net morphism from unfolding to skeleton \cite{desel91,padberg98}] \label{prop:inducedmorphism}
	Let $ C $ be a coloured Petri net, $ U $ its unfolding and $ S $ its skeleton.
	The mapping $ \mu: (P_{\U} \cup T_{\U}) \rightarrow (P_{\S} \cup T_{\S})$ is a net morphism from $ U $ to $ S $, where
	\begin{itemize}
		\item $ \forall [p,c] \in P_{\U} : \mu([p,c]) = p \in P_{\S}$
		\item $ \forall [t,g] \in T_{\U}$: $\mu([t,g]) = t \in T_{\S}$
	\end{itemize}
\end{lemma}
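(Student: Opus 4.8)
\medskip
\noindent\textbf{Proof plan.}
The plan is to verify directly that $\mu$ meets every clause of Definition~\ref{def:morphism}, by substituting the defining equations of the unfolding $U$ and the skeleton $S$. The mapping $\mu$ is total and single-valued on $P_{\U}\cup T_{\U}$ by construction, and since $P_{\U}=\{[p,c]\mid p\in P_{\C},\,c\in\chi(p)\}$ while $P_{\S}=P_{\C}$, we get $\mu(P_{\U})\subseteq P_{\S}$; likewise $\mu(T_{\U})\subseteq T_{\S}$ because $T_{\U}=\{[t,g]\mid g\models\gamma(t)\}$ and $T_{\S}=T_{\C}$. So the two ``type'' conditions are immediate. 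For the remaining clauses the relevant observation is that the fiber of $\mu$ over a place $p\in P_{\S}$ is exactly $\{[p,c]\mid c\in\chi(p)\}$ and the fiber over a transition $t\in T_{\S}$ is $\{[t,g]\mid g\models\gamma(t)\}$.

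The core of the argument is the arc-weight clause. Because distinct unfolded places $[p,c]$ collapse onto a single skeleton place $p$, this clause has to be read over the fibers of $\mu$, in the same summation sense as the initial-marking clause of Definition~\ref{def:morphism}: concretely, for every $p\in P_{\C}$, every $t\in T_{\C}$ with $(p,t)\in F_{\C}$, and every firing mode $g\models\gamma(t)$ one checks $W_{\S}(p,t)=\sum_{c\in\chi(p)}W_{\U}([p,c],[t,g])$. Expanding the right-hand side by the unfolding definition gives $\sum_{c\in\chi(p)}\operatorname{card}(\{x\in W_{\C}(p,t)\mid g(x)=c\})$. The key point is that a firing mode $g$ assigns to each variable $x\in W_{\C}(p,t)$ exactly one colour $g(x)\in\chi(p)$, so the sets $\{x\in W_{\C}(p,t)\mid g(x)=c\}$, as $c$ ranges over $\chi(p)$, partition $W_{\C}(p,t)$; summing their cardinalities therefore yields $\operatorname{card}(W_{\C}(p,t))$, which by the skeleton definition equals $W_{\S}(p,t)$. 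This is precisely where the \emph{uniformity} of $C$ is used: $\operatorname{card}(W_{\C}(p,t))$ is independent of $g$, so the identity holds simultaneously for all transitions $[t,g]$ in the fiber over $t$. The output arcs $([t,g],[p,c])$ are handled identically with $W_{\C}(t,p)$ in place of $W_{\C}(p,t)$, and pairs that are neither of the form (place, transition) nor (transition, place) carry weight $0$ on both sides.

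Finally, the initial-marking clause follows by the same fiberwise computation: $\sum_{[p,c]:\,([p,c],p)\in\mu}m_{0\U}([p,c])=\sum_{c\in\chi(p)}m_{0\C}(p)(c)=m_{0\S}(p)$, using $m_{0\U}([p,c])=m_{0\C}(p)(c)$ from the unfolding and $m_{0\S}(p)=\sum_{c\in\chi(p)}m_{0\C}(p)(c)$ from the skeleton. I do not expect a genuine obstacle here — the text already calls the statement ``easy to see''. The only point needing care is bookkeeping: recognising that the fibers of $\mu$ are nontrivial, so that the weight clause must be summed over each fiber (mirroring the marking clause), and then invoking the colour-partition identity together with uniformity so that the equality holds uniformly across all firing modes of a given transition.
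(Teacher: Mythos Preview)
The paper does not actually supply a proof of this lemma: it is attributed to \cite{desel91,padberg98} and prefaced with ``It is easy to see\dots''. So there is no in-paper argument to compare against; your direct verification of the clauses of Definition~\ref{def:morphism} is the natural thing to do, and your fiberwise computations (the colour-partition identity for the weights and the marking sum) are correct.

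The one point worth flagging is the weight clause. As literally stated in Definition~\ref{def:morphism}, the condition is pointwise, $W(\mu(x),\mu(y))=W(x,y)$ for all $x,y$ in the source net, \emph{not} a fiber sum. Taken at face value this would require $W_{\U}([p,c],[t,g])=W_{\S}(p,t)$ for every colour $c$, which is plainly false whenever $g$ distributes the variables in $W_{\C}(p,t)$ over more than one colour. You noticed this and silently repaired it by reading the weight clause ``in the same summation sense as the initial-marking clause''. That is exactly the right move --- it is the standard condition for folding morphisms and is what the cited sources use --- but it is a correction of the paper's definition rather than a verification of it. It would strengthen your write-up to say so explicitly: the lemma holds for the fiberwise (summed) form of the weight condition, and the pointwise form in Definition~\ref{def:morphism} is too strong for non-injective $\mu$.
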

The net morphism $ \mu $ between $ U $ and $ S $ can as well be extended to the markings of $ U $ and $ S $, such as $ m_{\S}(p) = \sum_{[p,c] \in P_{\U}: ([p,c],p) \in \mu}: m_{\U}([p,c])$ for all $ p \in P_{\S}$, where $ m_{\U} \in RS(U) $ and $ m_{\S} \in RS(S) $.

\medskip
We continue with the introduction of the syntax and semantics of the temporal logic $CTL^*$ \cite{clarke86}.
The foundation for this logic are atomic propositions, properties which are either true or false in a given state.
$ CTL^* $ distinguishes state formulas and path formulas.

\begin{definition}[Syntax of $CTL^*$]
	The temporal logic $ CTL^* $ is inductively defined as follows:
	\begin{itemize}
\itemsep=0.9pt
		\item every atomic proposition is a state formula
		\item if $ \phi $ and $ \psi $ are state formulas, so are $ (\phi \wedge \psi) $, $ (\phi \vee \psi) $, and $ \neg \phi $
		\item every state formula is a path formula
		\item if $ \phi $ and $ \psi $ are path formulas, so are $ (\phi \wedge \psi) $, $ (\phi \vee \psi) $, $ \neg \phi $, $ \mathbf{X	\,} \phi$, $\mathbf{F\,} \phi $, $ \mathbf{G\,}\phi $, $ (\phi \mathbf{\,U\,} \psi) $, $ (\phi \mathbf{\,W\,} \psi) $, and $ (\phi \mathbf{\,R\,} \psi) $
		\item if $ \phi $ is a path formula then $ \mathbf{E\,} \phi $ and $ \mathbf{A\,} \phi $ are state formulas.
	\end{itemize}
\end{definition}

The semantics of $CTL^*$ relies on the concept of paths in the considered system, given as a transition system.

\begin{definition}[Path,Suffix]
	Let $ TS = [Q,q_0,R,A] $ be a transition system.
	A \emph{finite path} starting in state $ q_0 $ is a sequence $ \pi = q_0 \dots q_n $ of states where $ \forall i \in \{0,\dots,n-1\} : (q_i,a, q_{i+1}) \in R $.
	An \emph{infinite path} starting in $ q_0 $ is an infinite sequence $ \pi = q_0 q_1 \dots $ where $ \forall i \in \mathbb{N}: (q_i,a, q_{i+1}) \in R $.
	A path is a finite or infinite path.
	A path is \emph{maximal}, if it is infinite, or is a finite path
	$ q_1 \dots q_n $ where $ q_n $ is a \emph{deadlock}, i.e.~a state where, for all $ q \in Q $, $ (q_n,a,q) \notin R $.
	As a \emph{Suffix} of a path $ \pi $ we define $ \pi_i $ as the part of $ \pi $, starting in $ q_i $.
\end{definition}

The semantics of $ CTL^* $ is defined on infinite paths, as we find them in \emph{Kripke structures}.
A Kripke structure is a transition system $ K = [Q,q_0,R,A,L] $ where $ R $ is total, i.e. every state has at least one successor state.
Thus the maximal paths are always infinite here.
Aditionally, Kripke structures have a labelling function $ L : Q \rightarrow 2^{AP} $, which assigns the set of atomic propositions to every state, which are true in this state.
Every transition system can canonically be transformed into a Kripke structure by adding a \emph{silent transition action} $ (q_d,\tau,q_d) $ to $ R $ for each deadlock state $ q_d $ of the system, which does not have a successor state.
The semantics of $ CTL^* $ is defined by two satisfaction relations, both denoted with $ \models $, that relate markings and state formulas resp.~infinite paths and path formulas according to the following rules.

\begin{definition}[Semantics of $ CTL^* $] \label{def:semantics}
	Let $ K = [Q,q_0,R,A,L] $ be a Kripke structure.
	Let $ q \in Q $ be a state and $ \pi = q_0 q_1 \dots $ an infinite path of the system.
	Let futher be $ \Pi(q) $ the set of paths starting from $ q $.
	The satisfaction of a $ CTL^* $ formula is defined:
	\begin{itemize}
  \itemsep=0.9pt
		\item For an atomic proposition $ \phi $: let $ q \models \phi $ if $ \phi \in L(q) $.
		\item For a state formula $ \phi $: $ \pi \models \phi $, if $ q_0 \models \phi $.
		\item Boolean connectors:
		\begin{itemize}
          \itemsep=0.9pt
			\item $ q \models \neg \phi $, if $ q \not\models \phi $; $ \pi \models \neg \phi $ if $ q_0 \not\models \phi $
			\item $ q \models (\phi \land \psi) $, if $ q \models \phi $ and $ q \models \psi $;
			$ \pi \models (\phi \lor \psi) $ if $\pi \models \phi$ or $ \pi \models \psi $.
		\end{itemize}
		\item Temporal operators:
		\begin{itemize}
        \itemsep=0.9pt
			\item $ \pi \models \mathbf{X\,} \phi $, if $ \pi_1 \models \phi $
			\item $ \pi \models (\phi \mathbf{\,U\,} \psi)$, if $ \exists i \geq 0 : \pi_i \models \psi $ and $ \forall 0 \leq j < i: \pi_j \models \phi $.
		\end{itemize}
		\item Path quantifier: $ q_0 \models \mathbf{E\,} \phi$, if $ \exists \pi \in \Pi(q_0): \pi \models \phi$.
	\end{itemize}
	Let further
	$ \phi \vee \psi $ be equivalent to $ \neg(\neg \phi \wedge \neg \psi) $,
	$ \mathbf{F\,} \phi $ to $ \operatorname{true} \mathbf{\,U\,} \phi $,
	$ \mathbf{G\,} \phi $ to $ \neg \mathbf{\,F\,} \neg \phi $,
	$ \phi \mathbf{\,R\,} \psi $ to $ \neg(\neg \phi \mathbf{\,U\,} \neg \psi) $,
	$ \phi \mathbf{\,W\,} \psi $ to $ \mathbf{G\,} \phi \lor (\phi \mathbf{\,U\,} \psi) $
	and $ \mathbf{A\,} \phi $ to $ \neg \mathbf{\,E\,} \neg \phi $.
\end{definition}

A Kripke structure satisfies a state formula if its initial states does.
It satisfies a path formula if all paths starting in the initial state do.
For $ CTL^* $, several fragments are frequently studied.

\begin{definition}[Fragments of $CTL^*$]
	$ CTL^* $ formula $ \phi $ is in
	\begin{itemize}
\itemsep=0.95pt
	\item $ LTL $ if $ \phi $ does neither contain $ \mathbf{E} $ nor $ \mathbf{A} $
	\item $ ACTL^* $ if $ \phi $ does neither contain $ \mathbf{E} $ nor $ \neg $
	\item $ CTL $ if every occurrence of $ \mathbf{X,U,F,G,R} $ is immediately preceded by an occurrence of $ \mathbf{A} $ or~$\mathbf{E} $
	\item $ ACTL $ if $ \phi $ is in $ ACTL^* $ and $ CTL $
	\item for any fragment $ F $, $ \phi $ is in the fragment $ F_X $ if $ \phi $ is in $ F $ and does not contain $ \mathbf{X} $.
	\end{itemize}
\end{definition}

Since CTL and LTL contain, for all their operators, the dual operator w.r.t~negation, we can push negations to the bottom of formulas. Consequently, $ LTL $ is indeed a subset of $ ACTL^* $.

\section{Relations between reachability graphs} \label{sec:relations}

For describing relations between reachability graphs, we use the concepts of abstraction relation and simulation relation, defined for Kripke structures.

\begin{definition}[Abstraction Relation \cite{milner89}] \label{def:abstraction}
	Let $ K = [Q,q_0,R,A,L] $ and $ \hat{K} = [\hat{Q},\hat{q}_0,\hat{R},\hat{A},\hat{L}] $ be Kripke structures.
	An \emph{abstraction relation} exists between $ K $ and $ \hat{K} $, if there is a surjective abstraction function
	$ \sigma: Q \rightarrow \hat{Q} $, for which it holds that for every $ \hat{q} \in \hat{Q}$ and $ \forall a \in AP : \hat{q} \models a \gdw \forall q \in Q $ with $ (q,\hat{q}) \in \sigma: q \models a $.
\end{definition}

If such an abstraction relation exists between $ K $ and $ \hat{K} $, we say that $ \hat{K} $ (abstract system) abstracts $ K $ (concrete system).
A particular type of abstraction relation is the simulation relation.

\begin{definition}[Simulation Relation \cite{grumberg94}]\label{def:simulation}
	An abstraction relation between $ K = [Q,q_0,R,A,L] $ and $ \hat{K} = [\hat{Q},\hat{q}_0,\hat{R},\hat{A},\hat{L}] $ with the abstraction function $ \sigma: Q \rightarrow \hat{Q} $ is a \emph{simulation relation}, if
	$ \forall q,q_1 \in Q : q \xrightarrow{*} q_1 $ and $ (q,\hat{q}) \in \sigma \rightarrow \exists \hat{q_1} \in \hat{Q}: \hat{q} \xrightarrow{*} \hat{q_1} $ for $ \hat{q} \in \hat{Q} $ and $ (q_1,\hat{q_1}) \in \sigma $.
\end{definition}

If a simulation relation exists between $ K $ and $ \hat{K} $, we say that $ \hat{K} $ simulates $ K $.
$ ACTL^* $ properties are preserved through a simulation relation.

\begin{lemma}[Simulation Relation preserves $ACTL^*$ \cite{clarke86}] \label{prop:simulation}
	Let $ K = [Q,q_0,R,A,L] $ and $ \hat{K} = [\hat{Q},\hat{q}_0,\hat{R},\hat{A},\hat{L}] $ be Kripke structures.
	If there is a simulation relation between $ K $ and $ \hat{K} $, for every $ ACTL^* $ formula $ \phi $, it holds that
	$ \hat{K} \models \phi \Rightarrow K \models \phi $.
\end{lemma}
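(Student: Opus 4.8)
The plan is to prove this by structural induction on the formula $\phi$, but to make the induction go through we need to strengthen the statement so that it simultaneously handles state formulas evaluated at related states and path formulas evaluated along related paths. So first I would set up the right invariant: for state formulas $\phi$ in $ACTL^*$ and states $q \in Q$, $\hat q \in \hat Q$ with $(q,\hat q) \in \sigma$, we want $\hat q \models \phi \Rightarrow q \models \phi$; and for path formulas, whenever $\pi$ is an infinite path in $K$ from $q$ and $\hat\pi$ is an infinite path in $\hat K$ from $\hat q$ that \emph{matches} $\pi$ under $\sigma$ (i.e.\ $(\pi(i),\hat\pi(i)) \in \sigma$ for all $i$), then $\hat\pi \models \phi \Rightarrow \pi \models \phi$. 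The theorem then follows from the state-formula case applied at the initial states, since $(q_0,\hat q_0) \in \sigma$ by surjectivity of $\sigma$ together with the initial-marking condition on the net morphism, and since satisfaction of a path formula by a Kripke structure is defined via all paths from the initial state.

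Next I would carry out the induction cases. The atomic proposition case is exactly the defining property of an abstraction relation (Definition~\ref{def:abstraction}): $\hat q \models a$ implies $q \models a$ for every $q$ with $(q,\hat q)\in\sigma$. The boolean cases $\wedge$ and $\vee$ are immediate from the induction hypothesis; note that because $\phi$ is in $ACTL^*$ it contains no $\neg$, so we never have to argue in the "wrong" direction for negation — this is the whole point of restricting to $ACTL^*$. For the temporal operators $\mathbf{X}$ and $\mathbf{U}$ (from which $\mathbf{F},\mathbf{G},\mathbf{R},\mathbf{W}$ are derived, but again only the positive ones actually occur), the argument is a routine unfolding of the semantics along the two matched paths: if $\hat\pi \models \mathbf{X}\psi$ then $\hat\pi_1 \models \psi$, and $\pi_1$ matches $\hat\pi_1$, so the induction hypothesis gives $\pi_1 \models \psi$, hence $\pi \models \mathbf{X}\psi$; similarly for $\mathbf{U}$ using the same witness index $i$. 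The transition from a state formula to a path formula (a state formula is a path formula, satisfied by $\pi$ iff satisfied by $\pi(0)$) just invokes the state-formula induction hypothesis at the first states of the two paths.

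The only genuinely non-trivial case is the path quantifier $\mathbf{A}$ (there is no $\mathbf{E}$ in $ACTL^*$): we must show $\hat q \models \mathbf{A}\psi \Rightarrow q \models \mathbf{A}\psi$, i.e.\ assuming every infinite path from $\hat q$ satisfies $\psi$, we must show every infinite path from $q$ satisfies $\psi$. Here is where the \emph{simulation} condition (Definition~\ref{def:simulation}), as opposed to a bare abstraction relation, is essential, and this is the step I expect to be the main obstacle. Given an arbitrary infinite path $\pi$ from $q$, I need to construct a matching infinite path $\hat\pi$ from $\hat q$; then $\hat\pi \models \psi$ by hypothesis, and the induction hypothesis for $\psi$ transfers this to $\pi \models \psi$. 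Building $\hat\pi$ step by step requires, at each stage, that a one-step move $\pi(i) \to \pi(i+1)$ in $K$ be mirrored by a move $\hat\pi(i) \to \hat\pi(i+1)$ in $\hat K$ with $(\pi(i+1),\hat\pi(i+1))\in\sigma$ — this is precisely the (one-step specialisation of the) simulation property, and the delicate point is that it must hold even when $\pi(i)$ is a \emph{deadlock} in the original transition system, where the Kripke-structure completion has added a self-loop $\tau$; the simulation relation must be compatible with this completion so that $\hat\pi(i)$ is also completed by a $\tau$-self-loop, allowing $\hat\pi$ to be extended as the constant path from that point. I would note that this deadlock compatibility is exactly the subtle issue flagged in the introduction, and that the whole paper is about ensuring the abstraction relation induced by a net morphism genuinely is a simulation relation in this sense; here we simply assume it and feed it into the inductive construction. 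Once $\hat\pi$ is assembled, an application of König's lemma is not even needed since the construction is deterministic in the choice made at each step (or: we just make arbitrary choices, obtaining \emph{some} matching path, which suffices). This completes the induction and hence the lemma.
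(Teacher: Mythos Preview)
The paper does not actually prove this lemma: it is stated as a cited classical result (attributed to \cite{clarke86}) and is used as a black box throughout. There is therefore no ``paper's own proof'' to compare your proposal against.

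That said, your proposal is the standard and correct argument. Two minor remarks. First, since $ACTL^*$ as defined here forbids $\neg$ but admits $\mathbf{G}$, $\mathbf{R}$, and $\mathbf{W}$ as primitive path operators, you cannot reduce them to $\mathbf{U}$ via the $\neg$-based equivalences in Definition~\ref{def:semantics}; you should handle them directly (each is routine along matched paths, just like $\mathbf{X}$ and $\mathbf{U}$). Second, the paper's Definition~\ref{def:simulation} is phrased with $\xrightarrow{*}$ rather than a single step; for the $\mathbf{A}$ case you genuinely need the one-step version to build the matching path $\hat\pi$ inductively. You already flag this (``one-step specialisation''), and in the standard literature the one-step formulation is what is meant, so this is a wrinkle in the paper's presentation rather than a gap in your argument.
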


As deadlocks may occur in Petri nets, a reachability graph is not necessarily a Kripke structure.
To make the concepts of abstraction and simulation formally applicable to Petri nets, we need to transform the reachability graphs into Kripke structures, as described above.
Thus, for every deadlock marking $ m_d $ in a reachability graph, we add a self-loop $ (m_d,\tau,m_d) $ with a silent transition $ \tau $ to $ R $.
From now on, consider the reachability graphs of Petri nets as Kripke structures, arose out of this transformation.
For an abstraction relation, atomic propositions are essential, so we first specify atomic propositions in the context of Petri nets.

\begin{definition}[Atomic proposition] \label{def:atomic}
	Let $ N $ be a Petri net.
	An  \emph{atomic proposition} is an expression $ k_1 p_1 + \dots k_n p_n \leq k $, for some $n \in \mathbb{N}$ with  $k_1, \dots, k_n,k \in \mathbb{Z}$, and $p_1, \dots, p_n \in P$, where $ P $ is the set of places of $ N $.
	A marking $ m $ of a P/T net \emph{satisfies} the proposition $ k_1 p_1 + \dots + k_n p_n \leq k $, iff the term
	$\sum_{i=1}^n k_i \cdot m(p_i)$  evaluates to a number less than or equal to $ k $.
	A marking $ m $ of coloured net \emph{satisfies} proposition  $k_1 p_1 + \dots + k_n p_n \leq k$, iff the term
	$\sum_{i=1}^n k_i \cdot \sum_{c \in \chi(p_i)} m(p_i)(c)$ evaluates to a number less than or equal to $ k $.
	For both, $ m \models a $ denotes the fact that $ m $ satisfies atomic proposition $ a $.
\end{definition}

The net morphism $ \mu: (P_{\U} \cup T_{\U}) \rightarrow (P_{\S} \cup T_{\S}) $ between the unfolding $ U $ of a coloured net $ C $ and its skeleton $ S $ induces an abstraction relation between their reachability graphs.
To show this, we need to specify the unfolding of atomic propositions of coloured nets.
As $ S $ resp. $ C $ normally have another set of places as $ U $, the equisatisfiability between the concrete and the abstract states, required in Definition~\ref{def:abstraction}, is not trivial.

\begin{definition}[Unfolding of Atomic Propositions] \label{def:atomicunfolding}
	Let $ \mu: (P_{\U} \cup T_{\U}) \rightarrow (P_{\S} \cup T_{\S}) $ be the net morphism between $ U $ and $ S $.
	Let $ a_{\C} \in AP_{\C} $ an atomic proposition of a coloured net $ C $.
	Proposition $ a_{\C} $ can be unfolded to an atomic proposition $ a_{\U} \in AP_{\U} $ by substituting every occurrence of any place $ p \in P_{\C} $ by $ \sum_{[p,c] \in P_{\U}: \mu([p,c]) = p} [p,c] $.
\end{definition}

An atomic proposition $ a_{\C} $ and its unfolding $ a_{\U} $ are equisatisfiable.
To make the unfolding of atomic propositions more clear, consider example~\ref{ex:1} and
the atomic proposition $ p \leq 3 $.
As the colour domain of $ p $ is $ \chi(p) = \{g,r,b\}$ and $ p $ would be unfolded to the places $ [p,g],[p,r]$, and $[p,b]$, we unfold the atomic proposition to $ [p,g] + [p,r] + [p,b] \leq 3 $.

With the definition of atomic propositions and their unfolding we can build an abstraction relation between the unfolding of a coloured net and its skeleton.

\begin{theorem}[Abstraction Relation between $ U $ and $ S $] \label{prop:abstractionUS}
	Let $ U $ and $ S $ be related by the net morphism $ \mu: (P_{\U} \cup T_{\U}) \rightarrow (P_{\S}\cup T_{\S}) $ from proposition~\ref{prop:inducedmorphism}.
	The extension of $ \mu $ on the markings of $ U $ and $ S $ yields to a surjective abstraction function $ \sigma $ with $ (m_{\U},m_{\S}) \in \sigma $ for $ (m_{\U},m_{\S}) \in \mu $.
	Therefore, an abstraction relation between the markings of $ U $ and $ S $ exists.
\end{theorem}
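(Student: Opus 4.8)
The plan is to verify the two conditions in Definition~\ref{def:abstraction} for the map $\sigma$ obtained by extending $\mu$ to markings, namely surjectivity of the abstraction function and the equisatisfiability condition on atomic propositions. Recall that the extension sets $m_{\S}(p) = \sum_{[p,c]\in P_{\U}:\,\mu([p,c])=p} m_{\U}([p,c])$, and that by the initial-marking condition in Definition~\ref{def:morphism} together with Lemma~\ref{prop:inducedmorphism} the initial markings correspond, i.e. $\sigma(m_{0\U}) = m_{0\S}$. So $\sigma$ is well-defined as a function on the reachable states; but one first has to check it actually maps into $RS(S)$, which is exactly Lemma~\ref{prop:morphism}: every firing $m_{\U}\xrightarrow{[t,g]}m'_{\U}$ in $U$ induces $\sigma(m_{\U})\xrightarrow{t}\sigma(m'_{\U})$ in $S$, so by induction on the length of a firing sequence every reachable marking of $U$ is sent to a reachable marking of $S$.

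First I would prove surjectivity. Given $m_{\S}\in RS(S)$, fix a firing sequence $m_{0\S}\xrightarrow{\omega}m_{\S}$ in $S$ with $\omega = t_1\cdots t_n$. I want to lift this to $U$. This is not automatic in general (a skeleton firing need not be realizable in the unfolding — that is precisely the deadlock phenomenon the paper is about), so the cleaner route is: surjectivity here is only being claimed as a statement about the abstraction \emph{function} $\sigma$ restricted to $RS(U)$, where by convention we take $\hat Q = \sigma(RS(U))$, not all of $RS(S)$. In other words the abstract system is the image of $\sigma$, which makes $\sigma$ surjective onto its codomain by construction. I would state this explicitly to avoid the pitfall. (If instead one insists $\hat Q = RS(S)$, surjectivity genuinely can fail, and that is the whole point of Sections~\ref{sec:sim}--\ref{sec:fullnesscheck}.) So the honest content of the surjectivity clause is just: define $\hat Q := \{\sigma(m_{\U}) \mid m_{\U}\in RS(U)\}$ with the induced transitions, and then $\sigma : RS(U)\to\hat Q$ is onto by definition.

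Next I would handle the equisatisfiability condition. Fix $\hat m \in \hat Q$ and an atomic proposition $a$ of $S$, say $a: k_1 p_1 + \dots + k_\ell p_\ell \le k$ with $p_i\in P_{\S}=P_{\C}$. By Definition~\ref{def:atomicunfolding}, its unfolding $a_{\U}$ replaces each $p_i$ by $\sum_{[p_i,c]\in P_{\U}} [p_i,c]$. For any $m_{\U}$ with $\sigma(m_{\U})=\hat m$ we compute, using the definition of the extended morphism,
\[
\sum_{i=1}^{\ell} k_i\cdot \hat m(p_i) \;=\; \sum_{i=1}^{\ell} k_i \sum_{[p_i,c]\in P_{\U}} m_{\U}([p_i,c]) \;=\; \text{value of } a_{\U} \text{ at } m_{\U},
\]
so $\hat m\models a$ in $S$ iff $m_{\U}\models a_{\U}$ in $U$; in particular the right-hand side does not depend on the choice of preimage $m_{\U}$, which is exactly what Definition~\ref{def:abstraction} demands (the value is determined by $\hat m$ alone). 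This is the step the paragraph just before the theorem flags as "not trivial", but once the unfolding of propositions is defined as in Definition~\ref{def:atomicunfolding} it reduces to this one-line linearity computation. I would spell out the coloured-net side too: for $m_{\U}$ corresponding to a coloured marking $m_{\C}$ we have $m_{\U}([p,c]) = m_{\C}(p)(c)$, so the value of $a_{\U}$ at $m_{\U}$ equals $\sum_i k_i \sum_{c\in\chi(p_i)} m_{\C}(p_i)(c)$, matching Definition~\ref{def:atomic}, which ties the three levels ($C$, $U$, $S$) together.

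The main obstacle is conceptual rather than computational: being careful that "surjective" is meant relative to the image $\hat Q = \sigma(RS(U))$ and not relative to $RS(S)$. If one is sloppy here the theorem is simply false, and the entire motivation of the paper collapses. Everything else — well-definedness via Lemma~\ref{prop:morphism}, correspondence of initial markings via the morphism definition, and the equisatisfiability via the linearity of atomic propositions under Definition~\ref{def:atomicunfolding} — is routine. I would therefore lead with an explicit sentence pinning down what the abstract state set is, then dispatch the two conditions in that order.
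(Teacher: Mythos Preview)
Your equisatisfiability argument is essentially the paper's, just written more crisply: both reduce to the linearity computation $\sum_i k_i\, m_{\S}(p_i) = \sum_i k_i \sum_c m_{\U}([p_i,c])$, and the paper routes it through the coloured marking $m_{\C}$ while you go directly. No substantive difference there.

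Where you diverge is surjectivity. You restrict to reachable markings, observe that $\sigma$ need not be onto $RS(S)$ (correctly: that is the deadlock phenomenon), and then salvage the claim by shrinking the codomain to $\hat Q := \sigma(RS(U))$. The paper does \emph{not} do this. Immediately after the theorem statement it notes that ``markings here include reachable and non-reachable markings'', and indeed the proof repeatedly stresses that the corresponding $m_{\C}$ ``may be unreachable''. With $Q$ taken as \emph{all} markings of $U$ and $\hat Q$ as \emph{all} markings of $S$, surjectivity is trivial: given any $m_{\S}$, choose for each $p$ any distribution of $m_{\S}(p)$ tokens over the colours in $\chi(p)$ to obtain a preimage $m_{\U}$. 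No lifting of firing sequences is needed, and Lemma~\ref{prop:morphism} plays no role in this theorem.

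Both routes yield a valid abstraction relation in the sense of Definition~\ref{def:abstraction}, but they are different relations on different state sets. The paper's choice keeps $\hat K$ as (the Kripke completion of) the full marking space of $S$, which is what the later sections tacitly use when they speak of ``the markings of $S$''. Your choice would force the abstract system to be a subgraph of $R_S(m_{0\S})$, and you would then have to re-argue in Section~\ref{sec:sim} that the relevant skeleton markings actually lie in $\sigma(RS(U))$. So your detour is correct but buys you an extra obligation downstream; the paper's ``include unreachable markings'' trick is the cheaper move.
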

It is worth mentioning that markings here include reachable and non-reachable markings.

\begin{proof}
	Let $ a_{\U} \in AP_{\U}$, $a_{\C} \in AP_{\C}$ and $a_{\S} \in AP_{\S}$ be atomic propositions.
	The relation $ \sigma $ is an abstraction relation indeed, if for a marking $ m_{\S} $ of $ S $, $ m_{\S} \models a_{\S} \gdw \forall m_{\U} \models a_{\U} $ with $ (m_{\U},m_{\S}) \in \sigma $.
	If $ m_{\S} \models a_{\S} $, then $ \sum_{i=1}^n k_i \cdot m_{\S}(p_i) \leq k $.
	For every corresponding marking $ m_{\C} $ of $ C $, it holds that $ m_{\C} \models a_{\C} $, as $m_{\S}(p_i) = \sum_{c \in \chi(p_i)} m_{\C}(p_i)(c) $ for every $ i \in \{1,\dots,n\} $ and so, $ \sum_{i=1}^n k_i \cdot \sum_{c \in \chi(p_i)} m_{\C}(p_i)(c) \leq k $.
	Notice that $ m_{\C} $ may be unreachable.
	As the corresponding markings of the unfoldings are equisatisfiable, for every  $ m_{\U} $ of $ U $, it holds that $ m_{\U} \models a_{\U}$.
	Reversed, it must hold that if for a marking  $ m_{\S} $ with $ m_{\S} \not \models a_{\S} $, there is a marking $ m_{\U} $ with $ (m_{\U},m_{\S}) \in \sigma $, for which it holds that $ m_{\U} \not \models a_{\U} $.
	Let $ \sum_i^n k_i \cdot m_{\S}(p_i) > k $.
	For the marking $ m_{\C} $ also holds that $ m_{\C} \not \models a_{\C} $.
	This $ m_{\C} $ might be unreachable again.
	We can see, that for $ m_{\U} $, $ m_{\U} \not \models a_{\U} $ as well.
\end{proof}

The existence of an abstraction relation is not sufficient for transferring the validity results on the markings of $ S $ to $ U $.
We need in fact a simulation relation.
A simulation $ \sigma $ requires the preservation of the transitions between the simulating systems, so it should hold that $ \forall m_{\U},m_{\U\1} \in  RS(U): m_{\U} \xrightarrow{*} m_{\U\1}$ and $ (m_{\U},m_{\S}) \in \sigma \Rightarrow \exists m_{\S\1} \in  RS(S): m_{\S} \xrightarrow{*} m_{\S\1} $ and $ (m_{\U\1},m_{\S\1}) \in \sigma $ for $ m_{\S} \in RS(S) $.
As the coloured net may have deadlocks, which are not preserved in the skeleton as shown in the opening example, there may be silent transitions at the deadlock states of $U$, which are not preserved in the skeleton $S$.
Let $ m_{\U} \in  RS(U)$ be a deadlock of $ U $ not preserved in $S$, so $ m_{\U} \xrightarrow{\tau} m_{\U} $.
Let $ m_{\S} \in  RS(S) $ be the corresponding marking of $ S $ with $ (m_{\U},m_{\S}) \in \sigma $.
Since $ m_{\S} $ is not a deadlock, there is no silent transition added for $ m_{\S} $ and consequentially, there is no marking $ m_{\S\1} \in  RS(S)$ with $ m_{\S} \xrightarrow{*} m_{\S\1} $ and $ (m_{\U},m_{\S\1}) \in \sigma $, as $ m_{\U},m_{\S\1} $ do not fulfill atomic propositions equally.

\section{Simulation relation between reachability graphs} \label{sec:sim}

In this section, we discuss the existence of a simulation relation between the unfolding and the skeleton under various conditions.
As mentioned above, deadlocks that are not preserved in the skeleton, may cause problems.
We therefore distinguish coloured nets, where
\begin{enumerate}
\itemsep=0.9pt
	\item[a)] no deadlocks occur at all (Section~\ref{subsec:free}),
	\item[b)] all deadlocks are preserved in the skeleton (Section~\ref{subsec:findlow}),
	\item[c)] deadlocks are not always preserved. (Section~\ref{subsec:inject})
\end{enumerate}
The kind of the $ ACTL^* $ formula is significant as well. $ ACTL^* $ safety properties permit the use of the skeleton approach even if deadlocks are not preserved, as shown in Section~\ref{subsec:safety}.

\subsection{Deadlock-free nets} \label{subsec:free}

When the net $ C $ resp.~$ U $ has no deadlocks, the net morphism directly leads to a simulation relation between the markings of $ U $ and $ S $.
There is no need to add silent transitions in $ U $ that are not preserved in $ S $.

\begin{theorem}
	Let $ C $ be a coloured net without deadlocks, $ U $ its unfolding and $ S $ its skeleton.
	The net morphism $ \mu: (P_{\U} \cup T_{\U}) \rightarrow (P_{\S}\cup T_{\S}) $ from Lemma~\ref{prop:inducedmorphism} induces a simulation relation between the markings of $ U $ and $ S $.
\end{theorem}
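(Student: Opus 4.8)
The plan is to reuse the abstraction function $\sigma$ from Theorem~\ref{prop:abstractionUS} and to verify only the additional clause of Definition~\ref{def:simulation}: that every reachable transition of $U$ is matched, via $\sigma$, by a reachable transition of $S$. First I would pin down where the hypothesis is used. Since the transition system of $C$ is isomorphic to that of its unfolding $U$, the assumption that $C$ is deadlock-free implies that $U$ is deadlock-free as well; hence, when the reachability graph of $U$ is turned into a Kripke structure, no silent self-loop $(m_{\U},\tau,m_{\U})$ has to be added, so on $U$ the relation $\xrightarrow{*}$ coincides with ordinary Petri-net reachability. This is exactly the point that fails in the general case (the $\tau$-loop at a non-preserved deadlock discussed at the end of Section~\ref{sec:relations}), and deadlock-freeness removes it.

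Next I would take arbitrary $m_{\U},m_{\U\1}\in RS(U)$ with $m_{\U}\xrightarrow{*}m_{\U\1}$ together with some $m_{\S}$ satisfying $(m_{\U},m_{\S})\in\sigma$; since $\sigma$ is the extension of $\mu$ to markings, this means $m_{\S}=\mu(m_{\U})$, and $m_{\S}\in RS(S)$ because $\mu(m_{0\U})=m_{0\S}$ and reachability is preserved along firing sequences. By the previous paragraph, $m_{\U}\xrightarrow{*}m_{\U\1}$ is witnessed by a genuine firing sequence $m_{\U}\xrightarrow{t_1}m_1\xrightarrow{t_2}\cdots\xrightarrow{t_k}m_{\U\1}$ in $U$. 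Applying Lemma~\ref{prop:morphism} once per step, and using that $\mu$ is a net morphism by Lemma~\ref{prop:inducedmorphism}, I obtain $\mu(m_{\U})\xrightarrow{\mu(t_1)}\mu(m_1)\xrightarrow{\mu(t_2)}\cdots\xrightarrow{\mu(t_k)}\mu(m_{\U\1})$ in $S$, hence $m_{\S}\xrightarrow{*}\mu(m_{\U\1})$. Setting $m_{\S\1}:=\mu(m_{\U\1})$ yields $m_{\S\1}\in RS(S)$, $m_{\S}\xrightarrow{*}m_{\S\1}$, and $(m_{\U\1},m_{\S\1})\in\sigma$ directly from the definition of the marking extension of $\mu$, which is precisely the clause required by Definition~\ref{def:simulation}.

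Finally I would observe that nothing further is needed: Definition~\ref{def:simulation} only requires some matching reachable state of $S$, not a lockstep correspondence, so the additional behaviour of $S$ is harmless; in particular, any $\tau$-loops that might be added at deadlocks of $S$ only enlarge $\xrightarrow{*}$ on $S$ and cannot invalidate the witness. Together with Theorem~\ref{prop:abstractionUS}, which already establishes that $\sigma$ is a surjective abstraction relation, this upgrades $\sigma$ to a simulation relation between the markings of $U$ and $S$. I do not expect a genuine obstacle here; the only care needed is to state explicitly that deadlock-freeness of $U$ is exactly what makes the silent self-loops disappear, so that the argument does not tacitly reproduce the failure mode of the general case.
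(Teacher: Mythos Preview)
Your proposal is correct and follows essentially the same route as the paper: use the abstraction relation $\sigma$ from Theorem~\ref{prop:abstractionUS}, observe that deadlock-freeness of $U$ means no silent $\tau$-loops are added, and then lift a firing sequence in $U$ step by step to $S$ via Lemma~\ref{prop:morphism}. If anything, you are slightly more explicit than the paper about why possible $\tau$-loops on the $S$ side are harmless, but the argument is the same.
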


\begin{proof}
	The reachability graph $ R_{\U}(m_0) $ is a Kripke structure, without adding silent transitions.
	Let $ m_{\U}, m_{\U\1} \in RS(U)$ and $m_{\S} \in RS(S) $ be the corresponding marking of $ m_{\U}  $ with $ (m_{\U},m_{\S}) \in \mu $.
	The markings $ m_{\U} $ and $ m_{\S} $ are then related by the abstraction relation $ \sigma $ from Theorem\ref{prop:abstractionUS}: $ (m_{\U},m_{\S}) \in \sigma$.
	Because net morphisms preserve reachability, for $ t_{\U} \in T_{\U}$, if it holds that if $m_{\U} \xrightarrow{t_{\U}} m_{\U\1} $ in $ R_{\U}(m_0) $, then there is a marking $ m_{\S\1} \in RS(S)$ with $ m_{\S} \xrightarrow{\mu(t)} m_{\S\1} $ in $ R_{\S}(m_0) $, for which $ (m_{\U\1},m_{\S\1}) \in \mu $.
	Consequently, for all markings $ m_{\U},m_{\U\1} \in RS(U)$ with $ m_{\U} \xrightarrow{*} m_{\U\1} $ and $ (m_{\U},m_{\S}) \in \sigma $, there is a marking $ m_{\S\1} \in RS(S)$ with $ m_{\S} \xrightarrow{*} m_{\S\1} $ in $ R_{\S}(m_0) $ and $ (m_{\U\1},m_{\S\1}) \in \sigma $.
\end{proof}

Thus, according to proposition~\ref{prop:simulation}, $ ACTL^* $ properties are preserved.
If we can guarantee, that the considered net is deadlock-free, the skeleton abstraction can be used for transferring positive results of an $ ACTL^* $ verification in $ S $ to $ U $.

\subsection{Deadlock preservation} \label{subsec:findlow}

We now consider the case where a Petri net has deadlocks.
The reachability graph of this net is not readily a Kripke structure, hence all deadlock states were extended with a self loop with a silent action.
In \cite{findlow92}, two necessary and sufficient criteria are formulated, defining a class of coloured Petri nets which have a \emph{deadlock-preserving skeleton}.

\medskip
This means that every dead marking of the coloured net has a dead skeletal image, thus no deadlock of the coloured net is invisible in the skeleton.
By that it is possible to detect all deadlocks just by the skeletal analysis.
The two criteria have both characteristic advantages for our skeleton-based analysis and verification.
The first criterion relates to an equivalence relation of the transitions of the coloured net. The second one concerns the liveness of markings.
To determine, whether a given coloured Petri net has a deadlock-preserving skeleton, it is appropriate to use the first criterion, as it refers exclusively to the net structure and does not consider the behaviour of the net.

As we assume that the coloured net $C = [P_{\C},T_{\C},F_{\C},W_{\C},\chi,\gamma,m_{0\C}]$ is uniform, the number of input tokens a transition $ t \in T_{\C} $ requires from each place $ p_i $ for $ \scopen{i} $ with $ n = \abs{P_{\C}} $ is unambiguous.
From now on, this number of input tokens for a transition $ t $ is denoted as $ f_i(t) $, where $ f_i(t) = \abs{W_{\C}(p_i,t)} $ for $ \scopen{i} $.
These numbers form an input vector $ f : T \rightarrow \mathbb{N}^n$ for every transition $ t \in  T_{\C} : f(t) = (f_1(t),f_2(t),\dots,f_n(t)) $.
Building on that, we can determine a preorder $ (T_{\C},\lesssim) $ of the transitions of $ C $, such that $ \forall t,t' \in T_{\C}: t \lesssim t'$, iff $ f(t) \leq f(t') $.
This leads to an equivalence relation $ \sim $ on $ T_{\C} $, such that $ \forall t,t' \in T_{\C}: t \sim t'$, iff $ f(t)=f(t') $.
Transitions with an identical input are aggregated in one equivalence class of this equivalence relation.
Let $ T_{\C}/\!_\sim $ be the set of equivalence classes of $ T_{\C} $.
The preorder $ (T_{\C},\lesssim) $ induces a partial order $ (T_{\C}/\!_\sim, \leq) $ on the equivalence classes, such that $ \forall [t], [t'] \in T_{\C}/\!_\sim : [t] \leq [t']$, iff $t \lesssim t'$.

\begin{definition}[Full Transition Class] \label{def:full}
	An equivalence class $ [t] \in  T_{\C}/\!_\sim $ is \emph{full}, if for every marking $ m_{\C} $ of $ C $ with $ \abs{m_{\C}(p_i)} = f_i(t) $ for all $ \scopen{i} $, there is a transition $ t \in [t] $ that is enabled in $m_{\C}$.
\end{definition}

In other words, $ [t] $ is full, if any collection of bags matching the input size requirements of $ [t] $ also matches the input colour distribution requirements of one transition $ t \in [t] $ of this equivalence class.
This leads to the following lemma:

\begin{lemma}[Deadlock-Preserving Skeleton \cite{findlow92}]
	Let $ C $ be a uniform, coloured Petri net.
	$ C $ has a deadlock-preserving skeleton, iff every minimal transition class of $ C $ regarding $ (T_{\C}/\!_\sim, \leq) $ is full.
\end{lemma}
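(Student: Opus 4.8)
The plan is to prove both directions of the iff. For the ``if'' direction, assume every minimal transition class of $C$ with respect to $(T_{\C}/\!_\sim,\leq)$ is full, and let $m_{\C}$ be a dead marking of $C$. I want to show that the skeletal image $\mu(m_{\C})$ is dead in $S$. Suppose for contradiction that some transition $t\in T_{\S}=T_{\C}$ is enabled in $\mu(m_{\C})$; this means $W_{\S}(p_i,t)=f_i(t)\leq \mu(m_{\C})(p_i)=\abs{m_{\C}(p_i)}$ for all $\scopen{i}$, i.e.\ $m_{\C}$ contains at least $f_i(t)$ tokens of whatever colours on each place $p_i$. Since $(T_{\C}/\!_\sim,\leq)$ is a finite partial order, there is a minimal class $[t']\leq[t]$, so $f(t')\leq f(t)$ and hence $\abs{m_{\C}(p_i)}\geq f_i(t)\geq f_i(t')$ for all $i$. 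Now restrict/thin the bags: choose, for each $p_i$, a sub-multiset $m'_{\C}(p_i)\subseteq m_{\C}(p_i)$ with $\abs{m'_{\C}(p_i)}=f_i(t')$ exactly. Fullness of $[t']$ applied to $m'_{\C}$ yields a transition $t''\in[t']$ enabled in $m'_{\C}$; since $m'_{\C}(p_i)\subseteq m_{\C}(p_i)$ and coloured enabledness is monotone in the token bag on each preset place (the guard is satisfied by the same colour assignment, and the larger bag still contains the required tokens), $t''$ is enabled in $m_{\C}$ as well — contradicting that $m_{\C}$ is dead. Hence $\mu(m_{\C})$ is dead, so the skeleton is deadlock-preserving.

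For the ``only if'' direction, I argue contrapositively: suppose some minimal class $[t]$ is not full. Then by Definition~\ref{def:full} there is a marking $m_{\C}$ of $C$ with $\abs{m_{\C}(p_i)}=f_i(t)$ for all $\scopen{i}$ such that no transition in $[t]$ is enabled in $m_{\C}$. I claim $m_{\C}$ is a dead marking of $C$ whose skeletal image is not dead. Its image $\mu(m_{\C})$ satisfies $\mu(m_{\C})(p_i)=f_i(t)=W_{\S}(p_i,t)$, so $t$ is enabled in $\mu(m_{\C})$ in $S$ — the image is not dead. It remains to see $m_{\C}$ itself is dead: no transition $s\in[t]$ is enabled by choice of $m_{\C}$; and for any transition $s$ in a \emph{different} class, since $[t]$ is minimal, $[s]\not<[t]$, so $f(s)\not\leq f(t)$, meaning there is a place $p_j$ with $f_j(s)>f_j(t)=\abs{m_{\C}(p_j)}$, hence $s$ cannot be enabled in $m_{\C}$ for lack of tokens on $p_j$. (One may need to be slightly careful if $f(s)$ and $f(t)$ are incomparable versus $f(s)>f(t)$ strictly — but in both cases some coordinate of $f(s)$ strictly exceeds the corresponding coordinate of $f(t)$, which is all that is needed.) Thus $m_{\C}$ is dead in $C$ but $\mu(m_{\C})$ is live in $S$, so the skeleton is not deadlock-preserving.

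The main obstacle I anticipate is the thinning/monotonicity step in the ``if'' direction: one must justify that having \emph{more} tokens than $f_i(t')$ on the preset places of $t'$ (of arbitrary colours) still lets the fullness-guaranteed transition fire, i.e.\ that enabledness of a coloured transition is monotone under enlarging the token bags on its input places while keeping those places' colour domains fixed. This is true because a firing mode $g\models\gamma(t'')$ that is witnessed by a sub-bag is still witnessed by any larger bag containing it (the unfolding arcs $W_{\U}([p,c],[t'',g])$ demand only that enough tokens of each colour $c$ be present). I would also double-check the precise reading of Definition~\ref{def:full} — whether the marking in the definition ranges over all colour distributions with the right sizes — since the whole argument hinges on being able to exhibit, for the counterexample marking in the ``only if'' part, a genuine coloured marking realizing the bad colour distribution, and on being able to thin an arbitrary witness marking to the exact sizes $f_i(t')$ in the ``if'' part.
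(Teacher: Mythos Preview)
The paper does not actually prove this lemma: it is quoted from \cite{findlow92} and immediately used (``This is a necessary and sufficient condition. With this criterion, we can show\ldots''). So there is no in-paper proof to compare against.

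Your argument is the standard one and is correct. A couple of remarks that may sharpen it. First, the notion of ``deadlock-preserving skeleton'' here must be read as a statement about \emph{all} markings of $C$, not only reachable ones; otherwise the ``only if'' direction would fail, since the witness marking you build (with $\abs{m_{\C}(p_i)}=f_i(t)$ for all $i$, hence empty on places outside the preset of $[t]$) need not be reachable from $m_{0\C}$. This is consistent with the paper's structural treatment (the criterion does not mention $m_{0\C}$) and with Definition~\ref{def:full}, which quantifies over arbitrary markings. Second, in the ``only if'' direction your case analysis is clean once you note that, for $[s]\neq[t]$ with $[t]$ minimal, either $[s]>[t]$ or $[s]$ is incomparable to $[t]$; in both cases there is a coordinate $j$ with $f_j(s)>f_j(t)=\abs{m_{\C}(p_j)}$, so $s$ is disabled purely for cardinality reasons. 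Third, your monotonicity step in the ``if'' direction is exactly right: a firing mode $g\models\gamma(t'')$ witnessed in the thinned bag $m'_{\C}$ remains a witness in $m_{\C}\supseteq m'_{\C}$, since enabledness in the unfolding only requires $m_{\C}(p)(c)\geq W_{\U}([p,c],[t'',g])$ for each colour $c$.
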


This is a necessary and sufficient condition.
With this criterion, we can show that the simulation relation between a coloured net and its skeleton is preserved for a subclass of coloured nets, which have a deadlock-preserving skeleton.
If the criterion holds, the net morphism $ \mu: (P_{\U} \cup T_{\U}) \rightarrow (P_{\S}\cup T_{\S}) $ induces a simulation relation.
If a coloured net has a deadlock-preserving skeleton, for every added silent transition for a dead marking $ m_{\U} \in RS(U) $ of $ U $, there is an added silent transition for the dead skeletal image $ m_{\S} \in RS(S) $ with $ (m_{\U},m_{\S}) \in \mu $ as well.
With regard to Lemma~\ref{prop:simulation}, we can verify the $ ACTL^* $ properties only in $ S $ without risking wrong conclusions about the behavior of $ C $.

\subsection{Inject deadlocks to skeleton} \label{subsec:inject}

The main focus of this section are nets with deadlocks, but without deadlock-preserving skeleton.
Here, the net morphism does not induce a simulation relation, so the $ ACTL^* $ results cannot be transferred directly from the skeleton to the coloured net.
We present an approach to modify the skeleton net such that every deadlock of the unfolding occurs in the new skeleton, but potentially with some delay.
In this case we cannot guarantee that every dead marking has a dead skeletal image, but we can at least guarantee that for a dead marking, the corresponding skeletal deadlock occurs after a finite number of actions.

\begin{definition}[Modified Skeleton Net]
	Let $ C = [P_{\C},T_{\C},F_{\C},W_{\C},\chi,\gamma,m_{0\C}] $ be a uniform coloured net.
	The \emph{modified skeleton} $ S' $ can be constructed from the skeleton $ S $ as, for every preset place $ p \in P_{\C} $ of a non-full minimal transition class $ [t] $, a complement place $ \overline{p} $ and a recipient transition $ t_r $ with $ \bullet t_r = \{p\} $ and $ t_r \bullet = \{\overline{p}\} $ are introduced with $ W(p,t_r) = W(t_r,\overline{p}) = 1 $.
	Apart from that, $ S' $ and $ S $ are identical.
\end{definition}

The modified skeleton has another behaviour than the original skeleton.
Every recipient transition $ t_r $ can successively empty its preset place $ p $ and stores the tokens on the complement place $ \overline{p} $.
These actions can be considered as silent actions of $ S' $.
Once a token is stored on $ \overline{p} $, it cannot leave this place anymore.
So, after a finite number of actions of the recipient transitions, the preset of $ [t] $ is empty and the transitions in $ [t] $ cannot fire anymore.
The deadlock of $ U $ occurs in $ S' $ after a finite number of silent actions of the recipient transitions.
Between $ U $ and $ S' $ a stuttering simulation holds, which is a weakened version of a simulation relation.
The next definition talks about partitions of infinite paths.
A partition of a path $ \pi $ consists of finite or infinite subpaths $ B_i $, such that their concatenation yields the whole $ \pi $.

\begin{definition}[Stuttering Simulation \cite{penczekszretergerthkuiper00}] \label{def:stuttering}
	Let $ K = [Q,q_0,R,A,L] $ and $ \hat{K} = [\hat{Q},\hat{q}_0,\hat{R},\hat{A},\hat{L}] $ be Kripke structures and $ a $ be an atomic proposition. A mapping $ \sigma_s: Q \rightarrow \hat{Q} $ is a \emph{stuttering simulation relation} if the following conditions hold:
	\begin{itemize}
\itemsep=0.9pt
		\item $ (q_{0},\hat{q_0}) \in \sigma_s $
		\item $ (q,\hat{q}) \in \sigma_s \Rightarrow q \models a \gdw \hat{q} \models a $ and for every path $ \pi = q_0 q_1 q_2 \dots $ of $ K $, there is a path $ \hat{\pi} = \hat{q_0} \hat{q_1} \hat{q_2} \dots $ of $ \hat{K} $, such that we can find partitions $ B_0,B_1,B_2,\dots $ for $ \pi $ resp. $ \hat{B_0},\hat{B_1},\hat{B_2},\dots $ for $ \hat{\pi} $ for which holds that:
		\begin{itemize}
\itemsep=0.9pt
			\item $ \forall i \geq 0: B_i,\hat{B_i}$ are not empty and finite
			\item every state of $ \hat{B_i} $ is related with every state of $ B_i $ by $ \sigma_s $.
		\end{itemize}
	\end{itemize}
\end{definition}

If two systems are related by a stuttering simulation, the behaviour of the concrete system $ K $ is simulated by the abstract system $ \hat{K} $, but $ \hat{K} $ can run internal silent actions while simulating.
Between the unfolding and the modified skeleton, we can observe this stuttering simulation.
To prove this, we first need to establish a relation between the markings of $ U $ and $ S' $.
Therefore, we create a relation between the markings of $ S $ and the markings of $ S' $.
A Marking $ m_{\S} $ of $ S $ and a marking $ m_{\S'} $ of $ S' $ are related, if
	\begin{itemize}
\itemsep=0.9pt
		\item $ m_{\S}(p) = m_{\S'}(p) + m_{\S'}(\overline{p}) $ for $ p \in \bullet [t] $, where $ [t] $ is a non-full minimal transition class
		\item $ m_{\S}(p) = m_{\S'}(p) $ otherwise.
	\end{itemize}
The relation between a marking $ m_{\U} $ and a marking $ m_{\S'} $ can then be established by composing the abstraction relation from $ m_{\U} $ to $ m_{\S} $ and with the one just defined.
Thus, the relation between the markings of $ U $ and $ S' $ is an abstraction relation.
The silent actions of the recipient transitions move the tokens of the preset places to their complementary places.
No matter if they have moved one or all tokens, the sum over the places $ p $ and $ \overline{p} $ is always invariant.

\begin{theorem}[Stuttering Simulation between $ U $ and $ S' $]
	Let $ C $ be a uniform coloured net, $ U $ its unfolding and $ S' $ its modified skeleton.
	Between the markings of $ U $ and $ S' $, a stuttering simulation $ \sigma_s $ holds.
\end{theorem}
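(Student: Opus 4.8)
The plan is to exhibit the stuttering-simulation relation $\sigma_s$ explicitly and then verify the two clauses of Definition~\ref{def:stuttering}. As already prepared in the text, take $\sigma_s$ to be the composition of the abstraction relation $\sigma$ from Theorem~\ref{prop:abstractionUS} (relating markings of $U$ with markings of $S$) with the relation between markings of $S$ and $S'$ given by $m_{\S}(p) = m_{\S'}(p) + m_{\S'}(\overline p)$ on preset places of non-full minimal transition classes and $m_{\S}(p) = m_{\S'}(p)$ elsewhere. The base condition $(m_{0\U}, m_{0\S'}) \in \sigma_s$ is immediate, since initially every complement place $\overline p$ is empty, so $m_{0\S'}$ restricted to $P_{\S}$ is exactly $m_{0\S}$, and $(m_{0\U}, m_{0\S}) \in \sigma$ holds by Theorem~\ref{prop:abstractionUS}. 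The atomic-proposition clause $(m_{\U}, m_{\S'}) \in \sigma_s \Rightarrow (m_{\U} \models a \gdw m_{\S'} \models a)$ follows because atomic propositions only mention places of $C$ (equivalently the original places of $S$, equivalently their unfolded counterparts), never the new complement places $\overline p$, and the defining equation $m_{\S}(p) = m_{\S'}(p)+m_{\S'}(\overline p)$ together with the equisatisfiability from Theorem~\ref{prop:abstractionUS} keeps the value of $\sum_i k_i\, m(p_i)$ invariant along the whole composition.

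The core of the argument is the path-matching clause. Given an infinite path $\pi = m_{\U}^0 m_{\U}^1 m_{\U}^2 \dots$ of $U$ with $m_{\U}^0 = m_{0\U}$, I would build a matching path $\hat\pi$ in $S'$ inductively, together with the partitions $B_i$ and $\hat B_i$. Each step $m_{\U}^i \xrightarrow{t_{\U}} m_{\U}^{i+1}$ of $\pi$ is either a ``real'' transition $[t,g]$ of the unfolding or a silent self-loop $\tau$ at a dead marking of $U$. For a real transition, Lemma~\ref{prop:morphism} (net morphism preserves reachability) gives the corresponding step $\mu(m_{\U}^i) \xrightarrow{\mu(t_{\U})} \mu(m_{\U}^{i+1})$ in $S$, and since $S$ and $S'$ agree on all arcs and transitions of $S$ (the recipient machinery is purely additive and the fired transition's preset is unchanged), this step lifts to a step in $S'$ from any $S'$-marking related to $\mu(m_{\U}^i)$; here the block $B_i$ is the single edge in $\pi$ and $\hat B_i$ is the single corresponding edge in $\hat\pi$. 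The interesting case is the silent $\tau$-loop at a dead marking $m_{\U}$ of $U$: the related $S'$-marking $m_{\S'}$ need not be dead yet, but by the construction of $S'$ and the Findlow-style analysis, the only transitions still enabled at $m_{\S'}$ are the recipient transitions $t_r$, which can fire only finitely often (each moves one token irreversibly from $p$ to $\overline p$, and the token count on $p$ is bounded). So from $m_{\S'}$ there is a finite sequence of recipient firings reaching a marking $m_{\S'}'$ that is dead in $S'$; all of $m_{\S'}, \dots, m_{\S'}'$ are related to $m_{\U}$ by $\sigma_s$ because the sum $m_{\S'}(p)+m_{\S'}(\overline p)$ is invariant under recipient firings and no atomic proposition sees $\overline p$. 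Once $m_{\S'}'$ is dead, its own $\tau$-loop matches the remaining $\tau$-loops of $\pi$ at $m_{\U}$. Grouping the $U$-side $\tau$-loops into finite blocks $B_i$ (say, one $\tau$-edge each once the $S'$-side has become dead, and one $\tau$-edge paired with the finite recipient-firing prefix otherwise) against the corresponding finite blocks $\hat B_i$ on the $S'$-side yields the required partitions with all blocks nonempty and finite and all cross-pairs related.

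I would close by arguing well-definedness of the block decomposition: the delicate point is ensuring each $\hat B_i$ is \emph{finite}, which is exactly where the irreversibility of recipient transitions is used — without it one could imagine an infinite ``silent'' stretch in $S'$ with no progress in $U$, violating the finiteness requirement of Definition~\ref{def:stuttering}. Concretely, the number of recipient firings before $S'$ reaches a dead marking is bounded by the total number of tokens residing on preset places of non-full minimal transition classes in $m_{\S'}$, a finite quantity; hence the silent catch-up is always finite. A secondary subtlety is that a dead marking $m_{\U}$ of $U$ must be mapped to a marking of $S'$ from which \emph{only} recipient transitions can fire — this needs that every non-recipient transition $t$ of $S'$ enabled at $m_{\S'}$ would, via fullness of some transition class $\le [t]$ below it, force a colour-compatible unfolded transition to be enabled at $m_{\U}$, contradicting deadness of $m_{\U}$; transitions whose class is non-full minimal have had their presets emptied by the recipients, and transitions above a non-full minimal class inherit enabledness obstructions. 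I expect this last structural point — pinning down precisely which transitions of $S'$ survive at the image of a $U$-deadlock, and why the recipient construction exhausts all of them — to be the main obstacle, and would handle it by a case analysis on the position of a transition's class in the partial order $(T_{\C}/\!_\sim, \leq)$.
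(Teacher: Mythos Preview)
Your approach mirrors the paper's---match real transitions one-to-one via the morphism, and at a $U$-deadlock fire recipients in $S'$ until a deadlock is reached---but you carry out details the paper merely asserts.

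One claim needs repositioning. You write that at $m_{\S'}$ (the $S'$-image of the $U$-deadlock, \emph{before} any recipient fires) ``the only transitions still enabled \dots\ are the recipient transitions''. That is false in general: the whole point of a non-preserved deadlock is that some ordinary skeleton transition is enabled at $m_{\S}$, and hence also at $m_{\S'}$ (which agrees with $m_{\S}$ on the original places since no recipient has fired yet along your constructed path). Fortunately you do not need that claim. Stuttering simulation only asks for \emph{some} matching path in $\hat K$, so you may simply choose to fire recipients; the real obligation is that the marking $m'_{\S'}$ reached \emph{after} exhausting them is dead. Your case analysis belongs at $m'_{\S'}$, not at $m_{\S'}$: if some non-recipient $t$ were enabled at $m'_{\S'}$, pick any minimal $[t'']\leq[t]$, so $t''$ is enabled at $m'_{\S'}$ as well; if $[t'']$ is non-full its own pre-places have been drained, contradicting enabledness of $t''$; hence $[t'']$ is full, and since any drained pre-place of $t''$ would again block it, none of its pre-places is drained, so $t''$ is enabled at $m_{\S}$ and fullness produces an enabled firing mode at $m_{\U}$---contradicting that $m_{\U}$ is dead. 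With this relocation your argument is complete and in fact supplies a justification the paper's proof leaves implicit.
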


\begin{proof}
	The definition of the marking guarantees, that an abstraction relation exists between the markings of $ U $ and $ S' $.
	States, which are related by the $ \sigma_s $, fullfil atomic propositions equally.
	Between the initial markings $ m_{0\U} $ and $m_{0\S'} $ the stuttering simulation holds.
	Now consider the path $ \pi_{\U} = m_{1\U}m_{2\U} \dots $ of $ U $ and the correspondig path $ \pi_{\S'} = m_{1\S'}m_{2\S'} \dots $ of $ S' $, where $ (m_{i\U},m_{i\S'}) \in \mu$ for all $ i $.
	The the partitioning of $ \pi_{\U} $ and the corresponding path $ \pi_{\S'} $ in $ S' $ is obtained as follows:
	For a marking $ m_{i\U} $ of path $ \pi_{\U} $, which is not a deadlock, the corresponding part of $ \pi_{\S'} $ is simply $ m_{i\S'} $ with  $ (m_{i\U},m_{i\S'}) \in \sigma_s$.
	The partitioning of the paths for this parts is trivial: $ B_{i\U} = \{m_{i\U}\} $ resp. $B_{i\S'} = \{m_{i\S'}\}$.
	Let now be $ m_{i\U} $ a deadlock, which is only followed by the self-loop-$ \tau $-actions.
	The corresponding marking $ m_{i\S'} $ is not necessarily a deadlock.
	Firing the recipient transitions in $ m_{i\S'} $ yields to a sequence $ \tau^* $ which ends in a deadlock marking $ m_{d\S'} $, where only the self-loop-$ \tau $-action is possible as well.
	For partitioning, $ B_{i\U} $ contains only the deadlock state $ m_{i\U} $ of $ U $.
	$ B_{i\S'} $ contains the states $ m_{i\S'}, m_{i+1\S'},m_{i+2\S'}, \dots ,m_{d\S'}$, where $ m_{i+1\S'},m_{i+2\S'}, \dots $ are the markings, reached by actions of the recipient transitions and $ m_{d\S'} $ is the delayed deadlock marking.
	All states in $ B_{i\S'} $ have the same validity of atomic propositions and so they can be related with $ m_{i\U} $ by $ \sigma_s $.
	So, between $ U $ and $ S' $ a stuttering simulation holds.
\end{proof}

A stuttering simulation preserves $ACTL^*_X $ properties.

\begin{lemma}[Stuttering simulation preserves $ACTL^*_X$ \cite{penczekszretergerthkuiper00}]
	\label{prop:stuttering}
	Let $ K = [Q,q_0,R,A,L] $ and $ \hat{K} = [\hat{Q},\hat{q}_0,\hat{R},\hat{A},\hat{L}] $ be Kripke structures, which are related by a stuttering simulation.
	Then $ \hat{K} \models \phi \Rightarrow K \models \phi $ for any $ ACTL^*_X$ formula $ \phi $.
\end{lemma}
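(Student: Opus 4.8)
The plan is to strengthen the statement to a simultaneous claim over the two mutually recursive syntactic categories of $ACTL^*_X$, proved by structural induction on the formula. Concretely I would prove: \emph{(S)} for every $ACTL^*_X$ state formula $\psi$ and every pair $(q,\hat{q})\in\sigma_s$, $\hat{q}\models\psi$ implies $q\models\psi$; and \emph{(P)} for every $ACTL^*_X$ path formula $\psi$ and every pair of infinite paths $\pi$ of $K$ and $\hat{\pi}$ of $\hat{K}$ that admit stuttering-matching partitions $B_0 B_1\cdots$ resp.\ $\hat{B}_0\hat{B}_1\cdots$ as in Definition~\ref{def:stuttering}, $\hat{\pi}\models\psi$ implies $\pi\models\psi$. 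Given (S) and (P), the lemma follows at once: if $\phi$ is a state formula, $\hat{K}\models\phi$ means $\hat{q}_0\models\phi$, and $(q_0,\hat{q}_0)\in\sigma_s$ together with (S) gives $q_0\models\phi$, i.e.\ $K\models\phi$; if $\phi$ is a path formula, fix an arbitrary path $\pi$ of $K$ from $q_0$ and let $\hat{\pi}$ be the path of $\hat{K}$ from $\hat{q}_0$ supplied by Definition~\ref{def:stuttering}, so $\hat{\pi}\models\phi$ and (P) yields $\pi\models\phi$.

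Before the induction I would isolate the combinatorial fact that carries the whole argument: if $\pi,\hat{\pi}$ admit matching partitions and $q_j$ is the $j$-th state of $\pi$, lying in block $B_\ell$, then $\pi_j$ and the suffix $\hat{\pi}_{\hat{m}}$ of $\hat{\pi}$ starting at the first state of $\hat{B}_\ell$ again admit matching partitions --- take the tail of $B_\ell$ from $q_j$ followed by $B_{\ell+1},B_{\ell+2},\dots$ against $\hat{B}_\ell,\hat{B}_{\ell+1},\dots$, where the leading blocks match because the tail of $B_\ell$ is contained in $B_\ell$ and hence each of its states is $\sigma_s$-related to every state of $\hat{B}_\ell$, while the remaining blocks are matched pairwise; the symmetric statement, matching an interior suffix of $\hat{\pi}$ to a block-boundary suffix of $\pi$, holds likewise. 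Thus, up to stuttering, the relevant suffixes of $\pi$ and of $\hat{\pi}$ coincide at block boundaries and interior suffixes are irrelevant --- and this is exactly where the absence of $\mathbf{X}$ is used, since under $\mathbf{X}$ an interior suffix need not be related to its block-boundary suffix.

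The induction then runs through the $ACTL^*_X$ constructors. For an atomic proposition, (S) and (P) follow directly from the clause $q\models a\gdw\hat{q}\models a$ of Definition~\ref{def:stuttering} (for (P) using $q_0\in B_0$, $\hat{q}_0\in\hat{B}_0$, hence $(q_0,\hat{q}_0)\in\sigma_s$); the connectives $\wedge,\vee$ are trivial; a state formula read as a path formula reduces to (S) applied to $q_0,\hat{q}_0$. For $\psi=\mathbf{A}\,\phi'$ with $(q,\hat{q})\in\sigma_s$ and $\hat{q}\models\mathbf{A}\,\phi'$: every path $\pi$ from $q$ is matched by a path $\hat{\pi}$ from $\hat{q}$, which satisfies $\phi'$, so (P) for $\phi'$ gives $\pi\models\phi'$; hence $q\models\mathbf{A}\,\phi'$. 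For the temporal operators, $\mathbf{X}$ is excluded, and since $\mathbf{F}\phi'\equiv\operatorname{true}\mathbf{U}\phi'$ and $\phi'\mathbf{W}\psi'\equiv\mathbf{G}\phi'\vee(\phi'\mathbf{U}\psi')$, it suffices to treat $\mathbf{G}$, $\mathbf{U}$, and $\mathbf{R}$. For $\mathbf{G}\,\phi'$ with $\hat{\pi}\models\mathbf{G}\,\phi'$: for each state $q_j$ of $\pi$, the combinatorial fact matches $\pi_j$ to a block-boundary suffix of $\hat{\pi}$ that satisfies $\phi'$, so (P) for $\phi'$ gives $\pi_j\models\phi'$; hence $\pi\models\mathbf{G}\,\phi'$. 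For $\phi_1\mathbf{U}\phi_2$ with $\hat{\pi}\models\phi_1\mathbf{U}\phi_2$: pick $\hat{k}$ with $\hat{\pi}_{\hat{k}}\models\phi_2$ and $\hat{\pi}_{\hat{j}}\models\phi_1$ for all $\hat{j}<\hat{k}$, let $\hat{B}_n$ be the block of $\hat{q}_{\hat{k}}$ and $k$ the index of the first state of $B_n$; then $\pi_k$ matches $\hat{\pi}_{\hat{k}}$ by the symmetric form of the combinatorial fact, so $\pi_k\models\phi_2$, and for every $j<k$ the state $q_j$ lies in a block $B_\ell$ with $\ell<n$, whose $\hat{\pi}$-counterpart starts at an index $<\hat{k}$ and hence satisfies $\phi_1$, while $\pi_j$ matches it, so $\pi_j\models\phi_1$; hence $\pi\models\phi_1\mathbf{U}\phi_2$. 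The case $\phi_1\mathbf{R}\phi_2$ is symmetric: either $\hat{\pi}_{\hat{i}}\models\phi_2$ for all $\hat{i}$, whence $\pi_j\models\phi_2$ for all $j$ as in the $\mathbf{G}$ case, or there is a least $\hat{k}$ with $\hat{\pi}_{\hat{k}}\models\phi_1$, and one transfers $\phi_2$ up to and including the block of $\hat{q}_{\hat{k}}$ and $\phi_1$ at that block's boundary, exactly as for $\mathbf{U}$.

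The main obstacle is the index bookkeeping in the $\mathbf{U}$ and $\mathbf{R}$ cases: converting a witness position in $\hat{\pi}$ into a block-boundary witness position in $\pi$, and verifying that \emph{every} earlier suffix of $\pi$ --- including those taken in the interior of a block --- inherits $\phi_1$ from the appropriate block-boundary suffix of $\hat{\pi}$. All of this rests on the combinatorial fact about stuttering-matched suffixes, which is valid precisely because $\mathbf{X}$ has been dropped. Since this is Lemma~\ref{prop:stuttering}, taken from \cite{penczekszretergerthkuiper00}, I would present only this sketch and refer to that source for the detailed verification.
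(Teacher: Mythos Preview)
The paper does not give its own proof of this lemma: it is stated with the citation \cite{penczekszretergerthkuiper00} and used as a black box, with only the informal remark afterwards that the absence of $\mathbf{X}$ is what makes the silent stuttering steps harmless. Your proposal therefore cannot be compared against a proof in the paper, because there is none.

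That said, your sketch is the standard argument and is correct. The simultaneous structural induction on state and path formulas, the suffix-matching combinatorial fact (that a suffix of $\pi$ taken inside block $B_\ell$ is still stuttering-matched to the suffix of $\hat{\pi}$ starting at the first state of $\hat{B}_\ell$), and the case analysis for $\mathbf{A}$, $\mathbf{G}$, $\mathbf{U}$, $\mathbf{R}$ are exactly how the result is established in the cited source. Your closing sentence --- present the sketch and defer the details to \cite{penczekszretergerthkuiper00} --- is in fact precisely what the paper does (minus the sketch), so your treatment is, if anything, more informative than the paper's.
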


$ ACTL^*_X $ formulas permit claims about the overall behaviour of the system, except for referring to next states.
The silent actions in the abstract system can generate new next states, which replace the actual simulating next state.
Because of this, assumptions on next states can be falsified, which explains the restriction to $ ACTL^*_X $.
Nevertheless, the validity of at least a subset of $ ACTL^* $ formulas can be transferred from the modified skeleton to the unfolding.

\subsection{Safety properties} \label{subsec:safety}
In the context of net morphisms, safety properties make an exception with regard of their validation.

\begin{definition}[Safety Property \cite{katzgrumberggeist99}]
	An $ACTL^*$ property is a \emph{safety property}, if only the temporal operators $ \mathbf{W}, \mathbf{X}$ and the path quantifier $ \mathbf{A} $ occur.
\end{definition}

We claim that a safety property $ \phi $ is preserved by a net morphism even if that morphism does not induce a simulation relation.
In the context of the skeleton abstraction, the abstraction relation between the markings of $ U $ and $ S $ is sufficient for the preservation of $ ACTL^* $ safety formulas.
This fact was already informally mentioned in \cite{padberg98}.
However, that paper did not precisely define the class of properties and did not
prove the claim.

\begin{theorem}[Net Morphisms preserve $ ACTL^* $ Safety Properties]
	Let $ C $ be a coloured net, $ U $ its unfolding and $ S $ its skeleton.
	Let $ \mu: (P_{\U} \cup T_{\U}) \rightarrow (P_{\S}\cup T_{\S}) $ a net morphism, $ \phi_{\S} $ an $ ACTL^* $ safety property and $ \phi_{\U} $ its unfolding after Definition~\ref{def:atomicunfolding}.
	Let $ m $ be a marking of $ U $.
	Then it holds that: $ \mu(m) \models \phi_{\S} \Rightarrow m \models \phi_{\U}$.
\end{theorem}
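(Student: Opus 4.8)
The plan is to prove the contrapositive-free version by induction on the structure of the safety formula $\phi_{\S}$, exploiting two facts: (i) the abstraction relation $\sigma$ between the markings of $U$ and $S$ established in Theorem~\ref{prop:abstractionUS}, which guarantees equisatisfaction of atomic propositions via the unfolding of Definition~\ref{def:atomicunfolding}, and (ii) the fact that net morphisms preserve reachability (Lemma~\ref{prop:morphism}), so every transition of $U$ maps to a transition of $S$. The crucial observation is that the problematic direction of a simulation — a path of $S$ that has no counterpart in $U$ — never hurts us for safety properties, because safety formulas built from $\mathbf{W}$, $\mathbf{X}$, $\mathbf{A}$ and conjunction only make \emph{universal} claims about paths: if all paths of $S$ satisfy $\phi_{\S}$, then in particular the images of all paths of $U$ do, and those images carry enough information back to $U$.

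First I would make precise the path-lifting statement: for every maximal path $\pi = m_0 m_1 m_2 \dots$ of (the Kripke structure of) $U$, the sequence $\mu(\pi) = \mu(m_0)\,\mu(m_1)\,\mu(m_2)\dots$ is a path of $S$. For the genuine transitions this is immediate from Lemma~\ref{prop:morphism}; for the silent self-loop $\tau$ added at a deadlock $m_d$ of $U$, I would note that either $\mu(m_d)$ is also a deadlock in $S$ (and then it too carries a $\tau$-loop, so $\mu(\pi)$ stays a path), or $\mu(m_d)$ is not a deadlock in $S$, in which case $\mu(\pi)$ eventually becomes constant at $\mu(m_d)$ while $S$ still has outgoing edges there — but a \emph{constant} sequence sitting at a state is still an infinite path only if we allow it; since this is exactly the stuttering issue, here I would instead argue that $\mu(\pi)$ followed by \emph{any} maximal continuation from $\mu(m_d)$ in $S$ is a path of $S$, and the prefix $\mu(m_0)\dots\mu(m_d)$ is all that a safety formula can ever depend on. This is where the restriction to safety properties is essential: $\mathbf{A}(\psi_1 \mathbf{\,W\,}\psi_2)$ and $\mathbf{A}\mathbf{X}\psi$ are, on each path, determined by a finite prefix once a ``bad'' state would appear, so a violation in $U$ is witnessed on a finite prefix whose $\mu$-image is a prefix of a genuine path of $S$.

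Then I would run the structural induction on $\phi_{\S}$, with the statement strengthened to paths: for every path $\pi$ of $U$ and every path formula $\psi_{\S}$ in the safety fragment, $\mu(\pi) \models \psi_{\S}$ implies $\pi \models \psi_{\U}$, together with the state-formula version $\mu(m)\models\phi_{\S} \Rightarrow m\models\phi_{\U}$. The base case (atomic propositions) is exactly the equisatisfaction direction of Theorem~\ref{prop:abstractionUS} combined with Definition~\ref{def:atomicunfolding}. Conjunction is routine. For $\mathbf{X}$: if $\mu(\pi)\models\mathbf{X}\psi_{\S}$ then $\mu(\pi)_1 = \mu(\pi_1)\models\psi_{\S}$, and $\pi_1$ is again a path of $U$, so the induction hypothesis gives $\pi_1\models\psi_{\U}$, i.e. $\pi\models\mathbf{X}\psi_{\U}$ — but I must be careful when $\pi$ has length one because its only state is a $U$-deadlock; then $\pi_1 = \pi$ (via the $\tau$-loop) and the argument still goes through as long as the $\tau$-loop is present in $U$, which it is by construction. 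For $\mathbf{W}$: $\mu(\pi)\models \psi_{1\S}\mathbf{\,W\,}\psi_{2\S}$ means either some suffix of $\mu(\pi)$ satisfies $\psi_{2\S}$ with all earlier suffixes satisfying $\psi_{1\S}$, or all suffixes satisfy $\psi_{1\S}$; since $\mu(\pi)_i = \mu(\pi_i)$ and each $\pi_i$ is a path of $U$, the induction hypothesis transfers both $\psi_1$- and $\psi_2$-claims suffix-by-suffix, yielding $\pi\models \psi_{1\U}\mathbf{\,W\,}\psi_{2\U}$. For the path quantifier $\mathbf{A}$: to show $m\models\mathbf{A}\,\psi_{\U}$ take an arbitrary path $\pi$ of $U$ starting at $m$; then $\mu(\pi)$ is a path of $S$ starting at $\mu(m)$ (here I use the path-lifting statement, possibly extending $\mu(\pi)$ past a hidden deadlock as discussed), so $\mu(m)\models\mathbf{A}\,\psi_{\S}$ gives $\mu(\pi)\models\psi_{\S}$, and the induction hypothesis on the path formula gives $\pi\models\psi_{\U}$; as $\pi$ was arbitrary, $m\models\mathbf{A}\,\psi_{\U}$.

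The main obstacle is the deadlock bookkeeping in the path-lifting lemma: a hidden deadlock of $U$ (one whose skeletal image is not a deadlock) means $\mu(\pi)$ is \emph{not} literally a maximal path of $S$, only a finite prefix of one, and I have to argue cleanly that this finite prefix suffices to evaluate every safety path formula and that extending it to a genuine maximal path of $S$ cannot make a safety formula that held on all of $S$'s paths fail on this particular one. Making this rigorous — essentially, that safety formulas are ``prefix-closed'' in the appropriate sense and that the universal quantifier over $S$'s paths already covers every extension — is the technical heart of the argument; everything else is a routine structural induction using Theorem~\ref{prop:abstractionUS} and Lemma~\ref{prop:morphism}.
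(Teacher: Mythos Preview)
Your plan parallels the paper's structural induction and correctly isolates the hidden-deadlock issue as the crux, but where the paper argues the contrapositive $m \not\models \phi_{\U} \Rightarrow \mu(m) \not\models \phi_{\S}$, you attempt the direct implication. The contrapositive is the natural direction here: the defining feature of a safety formula is that every \emph{violation} is witnessed by a finite bad prefix, and a finite prefix of genuine $U$-transitions maps via Lemma~\ref{prop:morphism} to a genuine finite path in $S$, which one then extends arbitrarily in $S$ without undoing the violation. Your direct route instead requires the step ``every genuine $S$-path $\hat\pi$ from $\mu(m)$ satisfies $\psi_{\S}$, hence the stuttering sequence $\mu(\pi)$ --- which is \emph{not} an $S$-path when the deadlock is hidden --- also satisfies $\psi_{\S}$''. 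That is not the standard prefix property of safety formulas and is harder to justify than the paper's violation transfer. Your separate treatment of path versus state formulas is tidier than the paper's somewhat informal induction, but the contrapositive packaging buys more.

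The obstacle you flag at the end is real, and your proposed ``prefix-closure'' resolution does not close it for $\mathbf{X}$. If the very first state $m$ is itself a hidden deadlock, then $\pi[1] = m$ via the $\tau$-loop, so $\mu(\pi)[1] = \mu(m)$; yet every genuine $S$-path $\hat\pi$ from $\mu(m)$ has $\hat\pi[1] \neq \mu(m)$, so $\hat\pi \models \mathbf{X}\psi_{\S}$ tells you nothing about $\mu(m)$. The paper's own proof shares this gap in its Case~3 and in the concluding deadlock paragraph: when the finite counterexample prefix must itself use the $\tau$-loop (as it does whenever the deadlock precedes the position addressed by $\mathbf{X}$), that prefix is not a path of $S$ and cannot be ``transferred up to and including $m_i$''. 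Concretely, in Example~\ref{ex:1} the skeleton satisfies $\mathbf{AX}(-q \leq -1)$ while the unfolding does not, so the statement as written appears to need $\mathbf{X}$ excluded. You are right that this is the technical heart; neither your outline nor the paper's argument resolves it.
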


\begin{proof}
	We prove the contraposition $ m \not \models \phi_{\U} \Rightarrow \mu(m) \not \models \phi_{\S} $ by induction on the structure of $ \phi_{\C} $. \\
	\emph{Base:}
	If $ m \not \models \phi_{\U} $, then $ \mu(m) \not \models \phi{\S} $, corresponding to Definition~\ref{def:abstraction}. \\
	\emph{Step:}
	We therefore distinguish between the possible structures of $ \phi_{\U} $ and $ \phi_{\S} $:
	\begin{enumerate}
\itemsep=0.9pt
		\item $ \phi_{\U} = \psi_{\U} \land \xi_{\U} $ resp. $ \phi_{\U} = \psi_{\U} \lor \xi_{\U} $: the induction hypothesis can directly be applied to $ \psi_{\U} $ and $ \xi_{\U} $.
		\item $ \phi_{\U} = \mathbf{A} \psi_{\U} $: If $ m \not \models \phi_{\U} $, there is a path $ \pi = m \, m_1 m_2 \dots $ with $ \pi \not \models \psi_{\U} $.
		Because reachability is preserved, there is a path $ \mu(\pi) = \mu(m) \mu(m_1) \mu(m_2) \dots $ with $ \mu(\pi) \not \models \psi_{\S} $.
		So, $ \mu(m) \not \models \mathbf{A} \psi_{\S} $ resp. $ \mu(m) \not \models \phi_{\S}$.
		\item $ \phi_{\U} = \mathbf{A} \mathbf{X} \psi_{\U} $:
		If $ m \not \models \phi_{\U} $, there is a path $ \pi = m \, m_1 m_2 \dots $ where $ m_1 \not \models \psi_{\U} $.
		For the skeleton, there is a path $ \mu(\pi) = \mu(m) \mu(m_1) \mu(m_2) \dots $ where $ \mu(m_1) \not \models \psi_{\S} $.
		So, $ \mu(m) \not \models \mathbf{A} \mathbf{X} \psi_{\S} $ resp. $ \mu(m) \not \models \phi_{\S}$.
		\item $ \phi_{\U} = \mathbf{A} \psi_{\U} \mathbf{W} \xi_{\U} $, which is the disjunction between
		\begin{enumerate}
\itemsep=0.9pt
			\item[a)] $ \phi_{\U} = \mathbf{A} \mathbf{G} \psi_{\U} $:
			If $ m \not \models \mathbf{A} \mathbf{G} \psi_{\U} $, there is a path $ \pi = m \, m_1 m_2 \dots $ with a marking $ m_i \not \models \psi_{\U} $.
			Hence, $ \pi \not \models \mathbf{G} \psi_{\U} $.
			Again, the preservation of reachability leads to a path $ \mu(\pi) = \mu(m) \mu(m_1) \mu(m_2) \dots $ with a marking $ \mu(m_i) \not \models \psi_{\S} $.
			So, $ \mu(\pi) \not \models \mathbf{G} \psi_{\S} $ and thus $ \mu(m) \not \models \mathbf{A} \mathbf{G} \psi_{\S} $;
			\item[b)] $ \phi_{\U} = \mathbf{A} \psi_{\U} \mathbf{U} \xi_{\U} $:
			If $ m \not \models \mathbf{A} \psi_{\U} \mathbf{U} \xi_{\U} $ there is a path $ \pi = m \, m_1 m_2 \dots $ with $ \pi \not \models (\psi_{\U} \mathbf{U} \xi_{\U})$.
			This is possible in two different ways:
			On the one hand, for all $ i \geq 0:m_i \not \models \xi_{\U} $ can hold, hence $ \pi \not \models \mathbf{G} \xi_{\U} $.
			This can be treated analogously to case 4.a).
			On the other hand, there might be a $ m_i \models \xi_{\U} $, but there is also a $ m_j $ with $ j < i $ and $ m_j \not \models \psi_{\U} $.
			Then, there is a path $ \mu(\pi) = \mu(m) \mu(m_1) \mu(m_2) \dots $ with $ \mu(m_i) $ and $ \mu(m_j) $ with $ \mu(m_i) \models \xi_{\S}$ and $ \mu(m_j) \not \models \psi{\S} $ as well.
			Hence, it holds $ \mu(m) \not \models \mathbf{A} \psi_{\S} \mathbf{U} \xi_{\S}$.
		\end{enumerate}
		In both cases, $ \mu(m) \not \models \phi_{\S}$.
	\end{enumerate}

\eject

The invalidity of $ \phi_{\U} $ can always be proven with a finite counterexample path.
Deadlocks may just occur in the last marking of this path.
Let $ m_i $, the last marking of the counterexample were we can see the invalidity of $ \phi_{\U} $, be a deadlock.
Because we consider Kripke structures, every path of the system is infinite.
The counterexample path $ \pi $ is therefore continued to an infinite path $ \pi = m \, m_1 m_2 \dots m_i m_i m_i \dots $ by repeating the deadlock state $ m_i $.
This repetition does not change the finiteness of the counterexample.
If the deadlock $ m_i $ is preserved in the skeleton, this leads to an corresponding path $ \mu(\pi) $ with repetitions as well: $ \mu(\pi) = \mu(m) \mu(m_1) \mu(m_2) \dots \mu(m_i) \mu(m_i) \mu(m_i) \dots $ .
The invalidity of $ \phi_{\S} $ remains unchanged.
If the deadlock is not preserved, the path $ \mu(\pi) $ has another sequel: $ \mu(\pi) = \mu(m) \mu(m_1) \mu(m_2) \dots \mu(mi) \mu(m_{i+1}) \mu(m_{i+2}) \dots $ with $ \mu(m_{i+1}) \neq \mu(m_i)$.
The counterexample is transferred exactly up to and including $ m_i $, the markings $ \mu(m_{i+1}) \mu(m_{i+2}) \dots $ do not change the invalidity of $ \phi_{\S} $.
\end{proof}

\section{Checking full transition classes in symmetric nets}
\label{sec:fullnesscheck}

We proceed with an algorithmic approach to check whether a transition class is full. A brute-force solution would be to enumerate all firing modes of the transitions in the class and to check whether these firing modes cover all distributions of colors on their pre-places as stated in Definition~\ref{def:full}.
This approach may be very inefficient, as already observed in
\cite{iddunfolding}.
It would in particular prevent the application of the skeleton approach to colored nets that have an unfolding too large to be constructed.
Following the approach of \cite{iddunfolding}, we rather create an automaton that accepts precisely those assignments to the variables on the arcs to resp. from transition $ t $, which are firing modes of $ t $, i.e. that satisfy the guard $ \gamma(t) $.
We therefore assume the set of all variables to be ordered.
Then an assignment to the variables $ x_{i_1}, \dots, x_{i_n} $with $x_{i_j} < x_{i_k}$, for $j < k$, is a sequence of length $n$ and the $j$-th element of the sequence is in the domain of variable $x_{i_j}$.
A set of assignments to some set of variables is a set of sequences, all having same length.
The domians of the variables serve as alphabet for these sequences.
We adapt the concept of classical finite automata to this setting.

\begin{definition}[Automaton] \label{def:automata}
	Given a coloured net $ C $, a finite automaton $[X, Q, q_0, \delta, F]$ consists of a set $ X $ of all variables occurring in $ C $, a finite set $ Q $ of states, an initial state $ q_0 \in Q $,
	a set $ F $ of final states ($ F \subseteq Q $), and a deterministic transition function $ \delta: Q \times D \rightarrow Q $, where $ D $ is the union of all domains for the variables in $ X $.
\end{definition}

With this definition, we permit an assignment of the variables of $ X $ with elements from the domain $ D $ as an input for our automaton.
The domain for a variable $ x $ on arc $ F_{\C}(p,t) $ resp. $ F_{\C}(t,p) $ is the colour domain $ \chi(p) $ of place $ p $.
As usual, a sequence is accepted if a run starting in $ q_0 $ with this sequence ends in a final state.

In the sequel, we show how to construct an automaton that accepts the enabled firing modes of a transition. The resulting automaton can then be used to represent all distributions of
tokens that can be consumed by this transition. This information can finally be combined for all transitions of a transition class for checking whether it is full.
For constructing the automaton, we first need to consider the arc inscriptions and the guards more precisely.
We therefore choose the syntax of \emph{symmetric high level nets}, also known as stochastic well formed nets \cite{stochwellformed} although our approach may be easily adapted to other dialects of high-level nets.
The syntax of symmetric nets is simple and reasonably formalized in the PNML standard \cite{pnml}.
All high-level nets in the yearly model checking contests \cite{mcc2021} are modeled as symmetric nets.
Until now, we only suppose the arc inscriptions as a finite set of variables and the guard as a boolean predicate, which can be evaluated to true or false.

In detail, for symmetric nets, arc inscriptions are formal sums of terms or tuples of terms.
They have a rather restricted syntax for terms.

\begin{definition}[Term]
	\label{def:terms}
	Let $ X $ be a set of variables and $ \mathbb{Z} $ the set of integers.
	A \emph{term} can be
	\begin{itemize}
\itemsep=0.9pt
		\item a variable $ x \in X $,
		\item a constant $ k \in \mathbb{Z}$,or
		\item an increment term $ \Tau++ $ or decrement term $ \Tau-- $, for a term $ \Tau $.
	\end{itemize}
\end{definition}

Given an assignment $ \alpha $ to the variables in $ X $, the semantics $ \operatorname{val} $ of a term is defined by the conditions $ \operatorname{val}(x,\alpha) = \alpha(x) $, $\operatorname{val}(k,\alpha) = k $, $ \operatorname{val}(\Tau++,\alpha) = \operatorname{val}(\Tau,\alpha) +1 $, and $ \operatorname{val}(\Tau--,\alpha) = \operatorname{val}(\Tau,\alpha) -1 $.
Addition and subtraction is supposed to be modulo the boundaries of the domain.
Terms may occur positive or negative in the formal sum of an arc incription.

In symmetric nets, a guard consists of expressions, which are basically a comparison of terms of their boolean connections.

\begin{definition}[Expression]
	\label{def:expression}
	An \emph{expression} can be a comparison $\Tau_1 \oplus \Tau_2$ ($\oplus \in \{<,>,\leq,\geq,=, \neq\}$ between two terms $\Tau_1$ and $\Tau_2$,
	or a Boolean combination of expressions. We use the standard semantics for all operators.
\end{definition}

We only consider conjunction and disjunction as Boolean operators since negation can be removed using de Morgan's rules and the set of comparisons is closed under negation.

\subsection{Simplification of arc inscriptions}

Before constructing the automata, we present a simplification for the arc inscriptions, that turns the formal sum of terms resp. tuples of terms into a formal sum of simple variables resp. tuples of simple variables such that every variable occurs only once in any arc connected to a transition.

In general, assume an arc inscription $ \Tau_1 + \dots + \Tau_p - \Tau'_1 - ... - \Tau'_n $ with $ p $ positve and $ n $ negative terms.
This formal sum has size $ m = p - n $ with $ m > 0 $ (otherwise, the arc inscription does not make sense).
We introduce $ m $ variables $ x_1, \dots, x_m $.
The number of variables is the same as the number of token which pass this arc.

\medskip
The arc inscription is replaced with the (positive) formal sum $ x_1 + \dots + x_m $ of the fresh variables and the guard $ \gamma(t) $ of the transition $ t $ is extended to
$$ \gamma(t) \wedge \bigvee_{\pi \in \text{permutations}(p)}\big(\bigwedge_{i=1}^{m} x_i = \Tau_{\pi(i)} \wedge \bigwedge_{j=1}^n \Tau'_{j} = \Tau_{\pi(m+j)}\big). $$

Each permutation $ \pi \in \text{permutations}(p) $ is a bijection mapping variables and negative terms to occuring the positive terms here.

The combinatorics in this construction reflects the fact that tokens on places are not ordered.
This fact can be ignored in subsequent constructions.
The combinatorics furthermore reflects the fact that negative terms in a formal sum must match some positive term since the resulting multiset cannot contain negative multiplicities.
It is obvious that the modification does not change the semantics of the net.

\begin{example}
	If an arc inscription has the shape $ \Tau_1 + \Tau_2 + \Tau_3 - \Tau_4 $, we introduce two fresh variables $ x_1 $ and $ x_2 $ (as the size of the formal sum is 2). We then replace the arc inscription with $ x_1 + x_2 $ and extend the guard $ \gamma(t) $ of transition $ t $ to
	$$\gamma(t) \wedge ((x_1 = \Tau_1 \wedge x_2 = \Tau_2 \wedge \Tau_4 = \Tau_3) \vee (x_1 = \Tau_2 \wedge x_2 = \Tau_1 \wedge \Tau_4 = \Tau_3) \vee (x_1 = \Tau_1 \wedge x_2 = \Tau_3 \wedge \Tau_4 = \Tau_2)$$
	$$\vee (x_1 = \Tau_3 \wedge x_2 = \Tau_1 \wedge \Tau_4 = \Tau_2) \vee (x_1 = \Tau_2 \wedge x_2 = \Tau_3 \wedge \Tau_4 = \Tau_1) \vee (x_1 = \Tau_3 \wedge x_2 = \Tau_1 \wedge \Tau_4 = \Tau_1)).$$
\end{example}

If tuples appear in arc inscriptions, we replace them by a tuple of variables instead of a single variable. For instance, arc inscription $ <\Tau_1, \Tau_2, \Tau_3>$ is replaced with $ <x_1,x_2,x_3> $ and the guard $ \gamma(t) $ is extended to $ \gamma(t) \wedge x_1 = \Tau_1 \wedge x_2 = \Tau_2 \wedge x_3 = \Tau_3$.
This way, we may reduce all future considerations to variables that represent basic domains (color sets that are not cross products of other color sets).
Tuple variables in the guard itself are replaced accordingly.
Again, the semantics of the net is preserved by the modification.
From now on, consider the arc inscriptions as simplified.

In a typical symmetric net, formal sums are small, so the introduced combinatorics is moderate.
There is one exception, though.
Beyond the formal sums considered so far, symmetric nets permit some $ all(\chi(p)) $ construct that represents the formal sum of all elements of color domain $ \chi(p)$ of a place $ p \in P_{\C} $.
Since the combinatorics introduced by that construct is intractable, we disregard it.
In our implementation, all transition classes where an $ all $ construct is used in an arc inscription, are treated as if they were not full. This way, correctness of our approach is not at stake.

The following subsections show how to represent terms and expressions as automata.
This way, we obtain an automaton that accepts all firing modes of a transition.
We map that automaton to token distributions on the pre-places and finally aggregate the resulting automata for checking whether a transition class is full.

\subsection{Term and expression automata}

A term basically represents a value that may depend on an assignment to its occurring variables.
In the field of symmetric coloured level nets this value is an element of the color domain of the connected place.
Color domains in symmetric nets are enumerations or intervals of integer numbers.
Since enumerations can be coded as integers, we shall treat all domains as integer intervals.
Thanks to the simplifications in the previous section, we may disregard cross-product domains.

We represent a term $ \Tau $ as a term automaton $ A_\Tau $.
A term automaton extends automata as in Definition\ref{def:automata} with a mapping $ V: F \rightarrow D $ which maps an element from the domains to every final state.
The idea is that, for a sequence representing assignment $ \alpha $, the reached final state $ q_f $ satisfies $ V(q_f) = \operatorname{val}(\Tau,\alpha) $.
The construction itself is rather obvious, so we reduce our presentation to a few examples, shown in Figure \ref{fig:term}.
Regarding Figure \ref{fig:term3}, remind that incrementation is interpreted modulo the domain size, so the bottom right state is indeed $ q_{c_1} $.

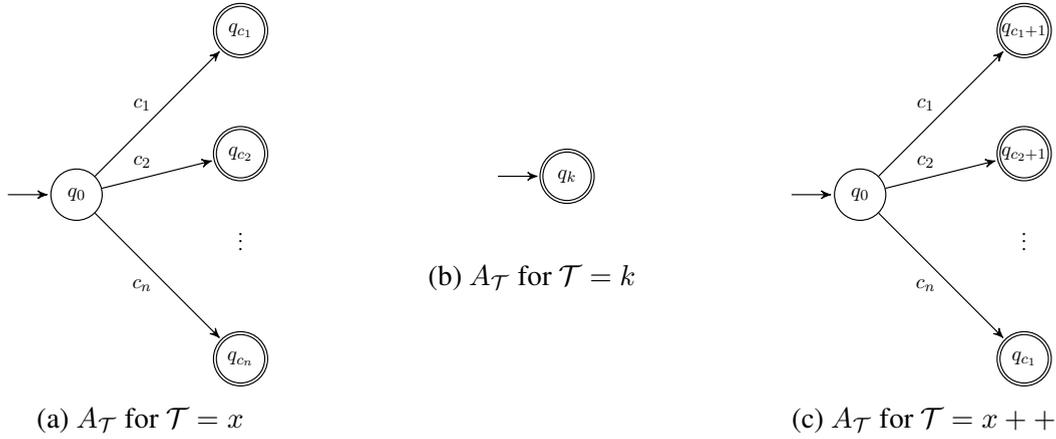
\begin{figure}[ht]
\vspace*{-3mm}
 \resizebox{\textwidth}{!}{
	\begin{subfigure}{0.3\textwidth}
    \hspace{-1cm}
    \begin{tikzpicture}[->,>=stealth',shorten >=1pt,auto,scale = 1,every node/.style={scale=0.65,minimum size=2.5cm,inner sep=.3mm},initial text={}]
        \node[state,initial] (q0) at (0,0) {$q_0$};
        \node[state,accepting] (q1) at (2,2) {$q_{c_1}$};
        \node[state,accepting] (q2) at (2,0.5) {$q_{c_2}$};
        \node[state,accepting] (qn) at (2,-2) {$q_{c_n}$};

        \node (dots) at (2,-0.5) {$ \vdots $};

        \node (c1) at (0.8,1.1){$c_1$};
        \node (c2) at (0.8,0.4){$c_2$};
        \node (c3) at (0.8,-1.1){$c_n$};

        \path (q0) edge[pos=0.6] node {} (q1);
        \path (q0) edge[pos=0.6] node {} (q2);
        \path (q0) edge[pos=0.35] node {} (qn);
      \end{tikzpicture}
      \subcaption{$ A_\Tau $ for $ \Tau = x $}
      \label{fig:term1}
  \end{subfigure}
  \begin{subfigure}{0.3\textwidth}
      \hspace{0.2cm}
      \begin{tikzpicture}[->,>=stealth',shorten >=1pt,auto,scale = 1,every node/.style={scale=0.65,minimum size=2.5cm,inner sep=.3mm},initial text={}]
        \node[state,initial,accepting] (qk) at (0,0) {$q_k$};
      \end{tikzpicture}
      \subcaption{$ A_\Tau $ for $ \Tau = k $}
      \label{fig:term2}
  \end{subfigure}
  \begin{subfigure}{0.3\textwidth}
      \hspace{-1cm}
      \begin{tikzpicture}[->,>=stealth',shorten >=1pt,auto,scale = 1,every node/.style={scale=0.65,minimum size=2.5cm,inner sep=.3mm},initial text={}]
        \node[state,initial] (q0) at (0,0) {$q_0$};
        \node[state,accepting] (q1) at (2,2) {$q_{c_1+1}$};
        \node[state,accepting] (q2) at (2,0.5) {$q_{c_2+1}$};
        \node[state,accepting] (qn) at (2,-2) {$q_{c_1}$};

        \node (dots) at (2,-0.5) {$ \vdots $};

        \node (c1) at (0.8,1.1){$c_1$};
        \node (c2) at (0.8,0.4){$c_2$};
        \node (c3) at (0.8,-1.1){$c_n$};

        \path (q0) edge[pos=0.6] node {} (q1);
        \path (q0) edge[pos=0.6] node {} (q2);
        \path (q0) edge[pos=0.35] node {} (qn);
      \end{tikzpicture}
      \subcaption{$ A_\Tau $ for $ \Tau = x++ $}
      \label{fig:term3}
  \end{subfigure}
  }\vspace*{-2mm}
  \caption{Examples of term automata. The domain for variable $ x $ is $ \{c_1,\dots,c_n\} $ and $ k $ is a constant.}
  \label{fig:term}
\end{figure}

\begin{figure}[!h]
\vspace*{-9mm}
  \resizebox{\textwidth}{!}{
  \begin{subfigure}[b]{0.53\textwidth}
  \hspace{-0.8cm}
      \begin{tikzpicture}[->,>=stealth',shorten >=1pt,auto,scale = 1,every node/.style={scale=0.65,minimum size=2.5cm,inner sep=.3mm},initial text={}]
        \node[state,initial] (q0) at (0,0) {$q_0$};
        \node[state,accepting] (q1) at (4,2) {$q_{c_1}$};
        \node[state,accepting] (q2) at (4,0.5) {$q_{c_2}$};
        \node[state,accepting] (qn) at (4,-2) {$q_{c_n}$};

        \node[state] (r) at (2,0) {$r$};

        \node (dots) at (4,-0.5) {$ \vdots $};

        \path (q0) edge[pos=0.6] node {} (r);

        \node (d) at (1,0.35){$[d_1,d_n]$};

        \path (r) edge[pos=0.6] node {} (q1);
        \path (r) edge[pos=0.6] node {} (q2);
        \path (r) edge[pos=0.35] node {} (qn);

        \node (c1) at (2.8,1.1){$c_1$};
        \node (c2) at (2.8,0.4){$c_2$};
        \node (c3) at (2.8,-1.1){$c_n$};
    \end{tikzpicture}
    \subcaption{Insert variable less than $x$ to $ A_\Tau $ in Figure \ref{fig:term1}.}
    \label{fig:insert1}
  \end{subfigure}
  \begin{subfigure}[b]{0.47\textwidth}
    \hspace{-1cm}
      \begin{tikzpicture}[->,>=stealth',shorten >=1pt,auto,scale = 1,every node/.style={scale=0.65,minimum size=2.5cm,inner sep=.3mm},initial text={}]
        \node[state,initial] (q0) at (0,0) {$q_0$};

        \node[state] (q1) at (2,2) {$q_{c_1}$};
        \node[state] (q2) at (2,0.5) {$q_{c_2}$};
        \node[state] (qn) at (2,-2) {$q_{c_n}$};

        \node (dots) at (2,-0.5) {$ \vdots $};

        \node[state,accepting] (r1) at (4,2) {$r_{c_1}$};
        \node[state,accepting] (r2) at (4,0.5) {$r_{c_2}$};
        \node[state,accepting] (rn) at (4,-2) {$r_{c_n}$};

        \node (dots) at (4,-0.5) {$ \vdots $};

        \path (q0) edge[pos=0.6] node {} (q1);
        \path (q0) edge[pos=0.6] node {} (q2);
        \path (q0) edge[pos=0.35] node {} (qn);

        \node (c1) at (0.8,1.1){$c_1$};
        \node (c2) at (0.8,0.4){$c_2$};
        \node (c3) at (0.8,-1.1){$c_n$};

        \path (q1) edge[pos=0.6] node {} (r1);
        \path (q2) edge[pos=0.6] node {} (r2);
        \path (qn) edge[pos=0.6] node {} (rn);

        \node (d) at (3,2.3){$[d_1,d_n]$};
        \node (d) at (3,0.8){$[d_1,d_n]$};
        \node (d) at (3,-1.7){$[d_1,d_n]$};
      \end{tikzpicture}
      \subcaption{Insert variable greater than $x$ to $ A_\Tau $ in Figure \ref{fig:term1}.}
      \label{fig:insert2}
  \end{subfigure}
}\vspace*{-2mm}
\caption{Insertion of an unused variable with domain $\{d_1,\dots,d_n\}$ into a term automaton}
  \label{fig:insert}\vspace*{-2mm}
\end{figure}
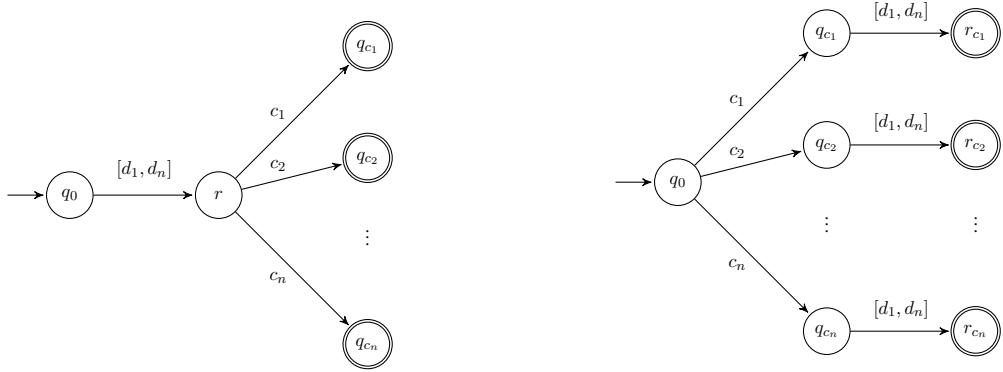

An automaton for an expression is generated using the well-known product automaton construction. However, before applying the product construction, we need to
harmonize the operand automata. Harmonizing means that we need to make sure that they talk about the same set of variables. An unused variable is inserted into a
term automaton or an automaton as shown in Figure \ref{fig:insert}.

Once two automata represent assignments to the same set of variables, we can use the well-known product construction for combining them.
The different operations basically concern the set of final states, so we leave that open for a moment.

\begin{sloppypar}
	\begin{definition}[Product automaton]
	Let $A_1 = [X, Q_1, q_{01}, \delta_1, F_1]$ and $A_2 = [X, Q_2, q_{02}, \delta_2, F_2]$ be automata. Automaton $A = [X, Q, q_0, \delta, F]$ is a product of $A_1$ and $A_2$
	if $Q = Q_1 \times Q_2$, $q_0 = [q_{01},q_{02}]$, and, for all $q_1 \in Q_1$ and $q_2 \in Q_2$ and values $a$, $\delta([q_1,q_2],a) = [\delta_1(q_1,a),\delta_2(q_2,a)]$.
	\end{definition}
\end{sloppypar}

For a comparison $\oplus \in \{=, \leq, \geq\}$, we build the product of two term automata.
The set of final states of the resulting automaton is defined based on the mappings $V_1$ and $V_2$ introduced for term automata: $[q_1,q_2] \in F$ if and only if $q_1 \in F_1$ and $q_2 \in F_2$ and $V(q_1) \oplus V(q_2)$.
For conjunction (resp.~disjunction), the set of final states is defined as follows: $[q_1,q_2] \in F$ if and only if $q_1 \in F_1$ and (resp.~or) $q_2 \in F_2$.
State explosion in the constructions can be alleviated by automata minimization.

 \begin{example}
 	As an example, consider the expression $ x++ = 2 $ and assume that the domain of $ x $ is $ \{1,2,3\} $.
 	The term automata for $ x++ $ and $ 2 $ are depicted in Figure \ref{fig:term}, where the automaton for $ 2 $ needs to be extended to the unused variable $ x $.
 	The result is shown in Figure \ref{fig:extend}. Figure \ref{fig:product} shows the product automaton.
 	Finally, minimization will merge states $[q_1,r_2]$ and $[q_3,r_2]$. From \cite{iddunfolding}, we borrow the idea of merging edges with consecutive annotations into one edge that is annotated with an interval.
 	The final result is shown in Figure \ref{fig:mini}.
 \end{example}

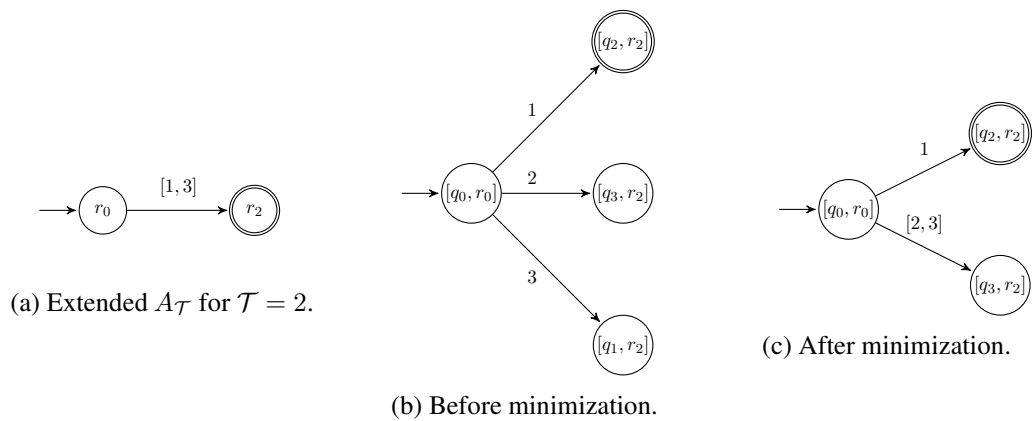
\begin{figure}[ht]
\vspace*{-3mm}
  \begin{subfigure}{0.3\textwidth}
    \hspace{-1cm}
    \begin{tikzpicture}[->,>=stealth',shorten >=1pt,auto,scale = 1,every node/.style={scale=0.65,minimum size=2.5cm,inner sep=.3mm},initial text={}]
        \node[state,initial] (q0) at (0,0) {$r_0$};
        \node[state,accepting] (q1) at (2,0) {$r_2$};
        \path (q0) edge[pos=0.6] node {} (q1);
        \node (c1) at (1,0.3){$[1,3]$};
    \end{tikzpicture}
    \subcaption{Extended $ A_\Tau $ for $ \Tau = 2 $.}
    \label{fig:extend}
  \end{subfigure}
  \begin{subfigure}{0.3\textwidth}
    \hspace{-1cm}
    \begin{tikzpicture}[->,>=stealth',shorten >=1pt,auto,scale = 1,every node/.style={scale=0.65,minimum size=2.5cm,inner sep=.3mm},initial text={}]
      \node[state,initial] (q0) at (0,0) {$[q_0,r_0]$};
      \node[state,accepting] (r1) at (2,2) {$[q_2,r_2]$};
      \node[state] (r2) at (2,0) {$[q_3,r_2]$};
      \node[state] (r3) at (2,-2) {$[q_1,r_2]$};
      \path (q0) edge[pos=0.6] node {} (r1);
      \path (q0) edge[pos=0.6] node {} (r2);
      \path (q0) edge[pos=0.35] node {} (r3);
      \node (c1) at (0.8,1.1){$1$};
      \node (c2) at (0.8,0.2){$2$};
      \node (c3) at (0.8,-1.1){$3$};
    \end{tikzpicture}
    \subcaption{Before minimization.}
    \label{fig:product}
  \end{subfigure}
  \begin{subfigure}{0.3\textwidth}
    \hspace{-0.8cm}
    \begin{tikzpicture}[->,>=stealth',shorten >=1pt,auto,scale = 1,every node/.style={scale=0.65,minimum size=2.5cm,inner sep=.3mm},initial text={}]
        \node[state,initial] (q0) at (0,0) {$[q_0,r_0]$};

        \node[state,accepting] (r1) at (2,1) {$[q_2,r_2]$};
        \node[state] (r2) at (2,-1) {$[q_3,r_2]$};

        \path (q0) edge[pos=0.6] node {} (r1);
        \path (q0) edge[pos=0.3] node {} (r2);

        \node (c1) at (1,0.8){$1$};
        \node (c2) at (1,-0.2){$[2,3]$};
      \end{tikzpicture}
      \subcaption{After minimization.}
      \label{fig:mini}
  \end{subfigure}\vspace*{-2mm}
\caption{Product automaton for expression $x++=2$}\vspace*{-2mm}
\end{figure}

\subsection{Checking full transition classes}

For checking whether a transition class is full, we need to transform assignments that satisfy the guard into token distributions that are consumed from pre-places.
Thanks to our initial simplifications, all we need to do is to project the assignments to the variables that occur on incoming arcs. This can be easily done with the
following considerations.

First, we make sure that the variables at incoming arcs occur first in the order of variables. Second, we make sure that variables of different transitions but concerning the same place
occur in the same position in the respective order.

Next, assume that there are $n$ variables at incoming arcs to a transition and make sure that there is no state in the corresponding automaton that is reached by a sequence of length $n$
and another sequence of different length. If such state exists it can be split into two equivalent states to satisfy the condition. Then, every state $q$ reached by a sequence of length $n$
is made final if a final state is reachable from $q$. This way, all other variables are existentially quantified. The result is the set of all tokens distributions on pre-places that can be
consumed by any enabled firing mode of a transition.

Finally, the resulting automata are combined using the or-product construction. This way, we obtain an automaton that represents all tokens distributions that can be consumed by any transition in the class. The full transition class criterion is satisfied if and only if the resulting automaton accepts every sequence of length $n$. This can be easily seen in the structure of the automaton
if the resulting automaton has been minimized.
\begin{example}
	As an example, consider the transition class $ [t] $ shown in Figure \ref{fig:transitionclass}.
	Assume that the domain of $x$ is $\{1,2,3,4\}$ while the domain of $y$ is $\{1,2,3\}$.
	Let $x< y$.
	Figures \ref{fig:firingmodes1} and \ref{fig:firingmodes2} show the automata representing the token distributions for $t_1$ and $t_2$, respectively.
	The or-product of these automata is shown in Figure \ref{fig:beforemin}.
	Minimization leads to the automaton in Figure \ref{fig:aftermin} from which it is easy to see that the transition class
	is full.
\end{example}

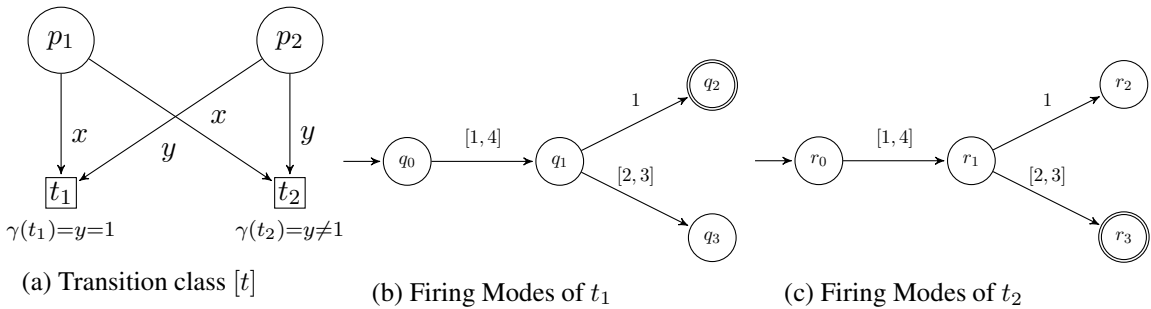
\begin{figure}[ht]
\vspace*{-4mm}
 \resizebox{0.95\textwidth}{!}{
  \begin{subfigure}{0.24\textwidth}
    \begin{center}
    \begin{tikzpicture}[->,>=stealth',shorten >=1pt,auto,scale = 1]
        \node[place]   at(0,0) (p1) {$p_1$};

        \node[place]   at(3,0) (p2) {$p_2$};

        \node[transition] at (0,-2) (t1) {$ t_1$};
        \node (g1) at (0,-2.5){$\scriptstyle \gamma(t_1)= y = 1$};

        \node[transition] at (3,-2) (t2) {$ t_2$};
        \node (g1) at (3,-2.5){$\scriptstyle \gamma(t_2)= y \neq 1$};

        \path (p2) edge[pos=0.6] node {$y$} (t1);
        \path (p2) edge[pos=0.6] node {$y$} (t2);
        \path (p1) edge[pos=0.6] node {$x$} (t2);
        \path (p1) edge[pos=0.6] node {$x$} (t1);
    \end{tikzpicture}
    \caption{Transition class $ [t] $}
    \label{fig:transitionclass}
    \end{center}
  \end{subfigure}
  \begin{subfigure}{0.34\textwidth}
      \hspace{-1cm}
      \begin{tikzpicture}[->,>=stealth',shorten >=1pt,auto,scale = 1,every node/.style={scale=0.65,minimum size=2.5cm,inner sep=.3mm},initial text={}]
        \node[state,initial] (q0) at (0,0) {$q_0$};
        \node[state,accepting] (q1) at (4,1) {$q_2$};
        \node[state] (q2) at (4,-1) {$q_3$};

        \node[state] (r) at (2,0) {$q_1$};

        \path (q0) edge[pos=0.6] node {} (r);
        \node (r3) at (3,0.8){$1$};

        \path (r) edge[pos=0.6] node {} (q1);
        \node (r1) at (1,0.3){$[1,4]$};

        \path (r) edge[pos=0.3] node {} (q2);
        \node (c2) at (3,-0.2){$[2,3]$};
      \end{tikzpicture}
      \subcaption{ Firing Modes of $t_1$}
      \label{fig:firingmodes1}
    \end{subfigure}
    \begin{subfigure}{0.34\textwidth}
    \hspace{-1cm}
      \begin{tikzpicture}[->,>=stealth',shorten >=1pt,auto,scale = 1,every node/.style={scale=0.65,minimum size=2.5cm,inner sep=.3mm},initial text={}]
        \node[state,initial] (q0) at (0,0) {$r_0$};
        \node[state] (q1) at (4,1) {$r_2$};
        \node[state,accepting] (q2) at (4,-1) {$r_3$};

        \node[state] (r) at (2,0) {$r_1$};

          \path (q0) edge[pos=0.6] node {} (r);

        \path (r) edge[pos=0.6] node {} (q1);
        \path (r) edge[pos=0.6] node {} (q2);

        \node (r3) at (3,0.8){$1$};
        \node (r1) at (1,0.3){$[1,4]$};
        \node (c2) at (3,-0.2){$[2,3]$};
      \end{tikzpicture}
      \subcaption{ Firing Modes of $t_2$}
      \label{fig:firingmodes2}
    \end{subfigure}
  }\vspace*{-2mm}
\caption{A Transition class and the automata representing its firing modes}
\end{figure}

\begin{figure}[ht]
\vspace*{-10mm}
 	\resizebox{\textwidth}{!}{
	\begin{subfigure}{0.5\textwidth}
	\begin{center}
		\begin{tikzpicture}[->,>=stealth',shorten >=1pt,auto,scale = 1,every node/.style={scale=0.65,minimum size=2.5cm,inner sep=.3mm},initial text={}]
		  \node[state,initial] (q0) at (0,0) {$[q_0,r_0]$};
		  \node[state,accepting] (q1) at (4,1) {$[q_2,r_2]$};
		  \node[state,accepting] (q2) at (4,-1) {$[q_3,r_3]$};
		
		  \node[state] (r) at (2,0) {$[q_1,r_1]$};

		  \path (q0) edge[pos=0.6] node {} (r);
		  \path (r) edge[pos=0.6] node {} (q1);
		  \path (r) edge[pos=0.3] node {} (q2);
		  \node (r1) at (3,0.8){$1$};
		  \node (r2) at (3,-0.2){$[2,3]$};
		  \node (r1) at (1,0.3){$[1,4]$};
		
		\end{tikzpicture}
		\subcaption{Or-product for $ [t] $ before minimization}
		\label{fig:beforemin}
    \end{center}
	\end{subfigure}
	\begin{subfigure}{0.5\textwidth}
	\begin{center}
		\begin{tikzpicture}[->,>=stealth',shorten >=1pt,auto,scale = 1,every node/.style={scale=0.65,minimum size=2.5cm,inner sep=.3mm},initial text={}]
  			\node[state,initial] (q0) at (0,0) {$[q_0,r_0]$};
  			\node[state,accepting] (q1) at (4,0) {$[q_2,r_2]$};

  			\node[state] (r) at (2,0) {$[q_1,r_1]$};

    		\path (q0) edge[pos=0.6] node {} (r);
    		\node (r1) at (1,0.3){$[1,4]$};
  			\path (r) edge[pos=0.6] node {} (q1);
			\node (r1) at (3,0.3){$[1,3]$};
		\end{tikzpicture}
		\subcaption{Or-product for $ [t] $ after minimization}
		\label{fig:aftermin}
  	\end{center}
    \end{subfigure}
	}\vspace*{-2mm}
\caption{Checking for full transition class}
\end{figure}
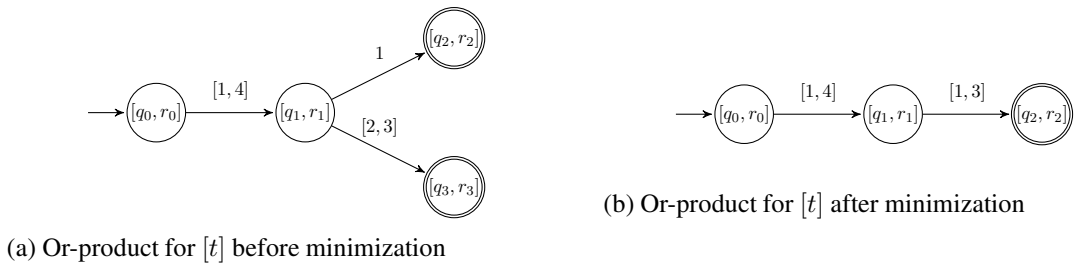

Our experiments revealed that we are able to decide the full transition class criterion for coloured nets where, so far, no participant in
the model checking contest could create its unfolding. We managed to get some verification results for those nets using the skeleton approach.

\section{Extension to place/transition nets} \label{sec:ext}

With the results presented so far, the skeleton abstraction is only available for systems modeled as coloured Petri nets.
In this section, we extend the applicability to nets that are originally modeled as P/T nets.
There exist translations from various high level system descriptions directly into P/T nets that could as well have been translated into coloured nets.
We propose an efficient procedure to fold a P/T net $ N $ into a coloured net $ C_{\N} $, for which we then can build the skeleton $ S_{\N} $.

\subsection{Folding P/T nets} \label{subsec:fold}

The idea of folding a P/T net into a coloured net is as old as coloured nets as such.
To our best knowledge, however, the efficiency of an actual implementation has not been observed so far.
Our approach is based on partition refinement.
The goal here is to partition a set $ \cal{M} $ into a partition $ M $ of disjunct subsets $ M_1,\dots,M_n $.
First, $ M $ contains only one subset, which is $ \cal{M}$.
The partition is then refined by the application of a split function.

\begin{definition}
	Let $ M = \{M_1, \dots, M_n\} $ be a partition of the set $ \cal{M} $ and $ f: \cal{M} \rightarrow \mathbb{Z} $ be a split function.
	The application of $ f $ on the partition $ M $ is defined as:
	$ \operatorname{split}(M,f)=\{\{x \mid x \in M_i, f(x) = j\}\} $
	for $ 1 \leq i \leq n, j \in \mathbb{Z}$.
\end{definition}

Informally, we separate elements, where $f$ yields different values.
This leads to a new partition of $ \cal{M} $.
For two subsets $ M_i,M_j $, it should hold that $ M_i \not = \emptyset $, $ M_i \cap M_j = \emptyset $ and also $ M_1 \cup \dots \cup M_n = \cal{M} $.
For implementing a split operation, we assume an array where every element of $ \cal{M} $ appears exactly once.
For every class in the partition, there is a pair of indices $ i $ and $ j $ such that the elements of the class are the array entries between $ i $ and $ j $.
For a split operation, we separately sort the elements of each class and then introduce new classes where adjacent elements have different $ f $-values.
Given a P/T net $ N = [P,T,F,W,m_0] $, the initial set, which should be partitioned is $ \cal{M} $ $= P \cup T $.
The coarsest partition  fitting all requirements is $M = \{P,T\}$.
We refine this partition such that, ultimately, every class of places of the given net serves as a place of the resulting coloured net $ C_{\N} $ while every class of transitions of the given net serves as a transition.

While folding, We need to conform the restrictions of uniformity, that building a skeleton is possible.
The uniformity criterion used here is more liberal than the one in \cite{ourskeleton}. There, we required that, for every two low level transitions $t_1$ and $t_2$ in some transition class,
every place class $M$
and every weight $k$, $t_1$ and $t_2$ have the same number of pre-places (resp.~post-places) with weight $k$ in $M$. Here, we only require that the sum of all weights between $t_1$
and places in $M$ is the same as for $M_2$.

This modification yields coarser classes (thus smaller skeletons) and the algorithms run faster.
This way, in an otherwise equivalent experimental setting, the number of cases where the skeleton approach responded as the fastest member of our portfolio rose from 3168 to 3702.
The number of queries we could answer but no participant of the model checking contest could answer in 2019 climbed from 226 to 248.

The folding happens with regard to an $ ACTL^* $ formula $ \phi $.
Let $ AP_{\phi} $ denote the set of atomic propositions occuring in $ \phi. $
The procedure for folding a P/T net into a coloured net is described in Figure~\ref{alg:folding}.

\begin{figure}[t]
	\begin{algorithmic}
		\State \textbf{Input:} Petri net $ N = [P,T,F,W,m_0] $
		\State \textbf{Output:} Partition $ M $ of $ P \cup T $ resp. $ C_{\N} = [P_{\C\N},T_{\C\N},F_{\C\N},W_{\C\N},\chi,\gamma,m_{0\C\N}]$
		\State Let $ M = \{P,T\} $;
		\State $ M = split(M,f) $ where $ f(x) = \operatorname{card}(\pre{x}) $;
		\State $ M = split(M,f) $ where $ f(x) = \operatorname{card}(\post{x}) $;
		\ForAll{atomic propositions $p \in AP_{\phi}$ with the form $ k_1 p_1 + \dots + k_n p_n <= k $}
			\State $ M = split(M,f) $ where for all occuring places $ p_i \in p : f(p_i) = k_i $ for $ \scopen{i} $, else $ f(x) = 0 $;
			\EndFor
		\ForAll{place classes $ M^* \in M $}
				\State $ M = split(M,f) $ where $ f(x) =   \sum_{p \in M^*} W(p,x)$;
				\State $ M = split(M,f) $ where $ f(x) =   \sum_{p \in M^*} W(x,p)$;
		\EndFor
		\State $P_{\C\N}$ = place classes of $ M $, $T_{\C\N}$ = transition classes of $ M $;
		\State $ (M^*,M^{*\prime}) \in F_{\C\N}$, $ W_{\C\N}(M^{*},M^{*\prime}) = \{x_1,\dots,x_k\}$, iff $ \exists x \in M^{*}, \exists y \in M^{*\prime} :$
		\State $ (x,y) \in F, W(x,y) = k$ for $ k \in \mathbb{N} $ and $ M^{*},M^{*\prime} \in M$;
		\State $ m_{0\C\N}(M^{*}) = \sum_{p \in M^{*}} m_{0}(p)$,
		$ \chi(M^{*}) = \{p \mid p \in M^* \} $ for every place class $ M^{*} \in M$;
		\State $ \gamma(M^{*}) = \bigvee_{t_i \in M^*} \big( \big(\bigwedge_{M^{*\prime} = \{p_1,\dots,p_n\} \in \pre{M^*}} \bigwedge_{j=1}^n \bigwedge_{k=1}^{W(p_j,t_i)} x_{p_{\sum_{\ell = 1}^{j-1}W(p_\ell,t_i)+k}} = p_j \big)\wedge \big(\bigwedge_{M^{*\prime} = \{p_1,\dots,p_n\} \in \post{M^*}} \bigwedge_{j=1}^n \bigwedge_{k=1}^{W(t_i,p_j)} x_{p_{\sum_{\ell = 1}^{j-1}W(p_\ell,t_i)+k}} = p_j\big)\big)$
		 for every transition class $ M^{*} \in M $;
	\end{algorithmic}\vspace*{-1mm}
	\caption{Algorithm for folding a P/T net into a coloured net.}
	\label{alg:folding}\vspace*{-2mm}
\end{figure}

In this procedure, the guard expressions $ \gamma $ deserve additional explanation.
If $ M^* $ is a transition in the folded net, its elements $ \{t_1,\dots,t_m\} $ serve as firing modes of transition $ M^*$ .
For each of these firing modes, we need to specify the effect on the pre-places and post-places.
That is why the general structure of the guard starts with $ \bigvee_{t_i \in M^*} \dots $ \,.
Since every pre- and every post-place needs to be considered, we have the conjunctions $ \bigwedge_{M^{*\prime} = \{p_1,\dots,p_n\} \in \pre{M^*}} \, \dots $ and $ \bigwedge_{M^{*\prime} = \{p_1,\dots,p_n\} \in \post{M^*}} \, \dots $ \,.
The remaining content of the guard expressions is the same for pre- and post-places.
It specifies for a given firing mode (i.e.~low level transition $ t_i $) how many tokens of some colour (i.e.~low level place $ p_j \in {M^*}' $) need to be consumed or produced.
This number is $ W(p_j,t_i) $ (or $ W(t_i,p_j) $, respectively).
For consuming $ k $ tokens of some colour $ p_j $ (leading to conjunction $ \bigwedge_{j=1}^n\dots $), we need to bind $ k $ of the arc variables to colour $ p_j $ (leading to conjunction ($ \bigwedge_{k=1}^{W(p_j,t_i)}\dots $).
If these bindings are done in consecutive order of the low level places $ p_1, \dots, p_n $, the $ k $ variables bound to $ p_j $ are those that start with index $ \sum_{\ell = 1}^{j-1}W(p_\ell,t_i) $ (or $ \sum_{\ell = 1}^{j-1}W(t_i,p_\ell) $).

To make this algorithm more understandable, we demonstrate it with an example.

\begin{example}
	\label{ex:folding}
	We consider a P/T net $ N $, which shows the dilemma of five dining philosophers.
	The P/T net is structured as follows:
	For every $i \in \{0,\dots,4\}$ there is a place $ th_i $ (philosopher $ i $ is thinking), a place $ hl_i $ (has left fork),
	$ hr_i $ (has right fork), $ ea_i $ (philosopher $ i $ is eating) and $ fo_i $ (fork $ i $ is on the table).
	There are the transitions $ tl_i $ (take left fork) that consume tokens from $ th_i $ and $ fo_i $, and produce on $ hl_i $,
	transitions $ tr_i $ (take right fork) that consume from $ hl_i $ and $ fo_{i+1 \operatorname{mod} 5} $ and produce on $ ea_i $, transitions $ rl_i $ (release left fork) that consume from $ ea_i $ and produce on $ hr_i$ and $fo_i$, and, finally, transitions $ rr_i $ (release right fork) that consume from $ hr_i $ and produce on $ fo_{i+1 \operatorname{mod} 5} $ and $ th_i $.
	Places $ th_i $ and $ fo_i $ are initially marked, and all arc weights are 1.
	For better readability, $ x_i $ describes the set $ x_0,\dots,x_4 $ for every node $ x \in P \cup T$ of the net.
	The folding is regarding the $ ACTL^* $ formula $\phi:  \mathbf{A}\mathbf{G} \,\, \neg ( \sum_i hr_i = 4)  \,\, \land \,\,  (\sum_i hl_i = 1) $ for $ i \in \{0,\dots,4\} $.
	Initially the coarsest partition distinguishes between places and transitions: $ M = \{\{th_i,ea_i,fo_i,hl_i,hr_i\},\{tr_i,tl_i,rl_i,rr_i\}\}$.
	Then, the sets are split according to the \emph{number of incoming} and \emph{outgoing arcs}.
	All places $ fo_i $ have two incoming and two outgoing transitions, all remaining places only have one incoming and one outgoing transition.
	The $ tr_i $ and $ tl_i $ transitions have two incoming places and one outgoing place, $ rl_i $ and $ rr_i $ the other way round.
	This leads to partition $ M = \{\{th_i,ea_i,hl_i,hr_i\},\{fo_i\},\{tr_i,tl_i\},\{rl_i,rr_i\}\}$.
	The \emph{atomic propositions} of $ \phi $ give additional restrictions, as the elements of the subsets finally should 	satisfy those propositions equally.
	So, for the atomic propositions $ \sum_{i=0}^4 hr_i = 4  $ and $ \sum_{i=0}^4 hl_i = 1 $, every place is mapped to its 	coefficient in the corresponding proposition.
	Transitions are not affected here, so $ M = \{\{th_i,ea_i\},\{hl_i\},\{hr_i\},\{fo_i\},\{tr_i,tl_i\},\{rl_i,rr_i\}\}$.
	For obtaining \emph{uniformity}, we split $\{tr_i, tl_i\}$ into $\{tr_i\}$ and $\{tl_i\}$ since every transition $tl_i$ has a pre-places in $\{th_i,ea_i\}$ while no transition $tr_i$ has,
	and we split $\{rl_i,rr_i\}$ into $\{rl_i\}$ and $\{rr_i\}$ since every transition $rl_i$ has a pre-place in $\{th_i, ea_i\}$ while no transition $rr_i$ has.	
 	The resulting partition is $ M = \{\{th_i,ea_i\},\{hl_i\},\{hr_i\},\{fo_i\},\{tr_i\},\{tl_i\}, $ \\
 	$ \{rl_i\},\{rr_i\}\} $ and is uniform.

\medskip	
	Finally, every place class $ M_i $ is turned into a place $ p_{M_i} $ with $ \sum_{p \in M_i} m(p) $ tokens and the colour domain $ \chi(p_{M_i}) = M_i $, and every transition class $ M_j$ into a transition $t_{M_j}$.
	We obtain the places $ thea,fo,hl,hr $ and the transitions $ tr,tl,rl,rr $.
	Let there be an arc from $ p_{M_i} $ to $t_{M_j}$, if there exists some $ p \in M_i  $ and $ t \in M_j $ with $ (p,t) \in F$.
	Arcs from transitions to places are formed analogously.
	An Arc $(p_{M_i},t_{M_j})$ ist assigned with the variables $ x_1,\dots,x_{w} $ where $ w = W(p,t)$ for $ p \in M_i$ and $ t \in M_j $.
	In the end, we need to formulate the guard of the transitions, which needs to ensure, that the coloured transition only fires if the right coloured tokens lay on the pre-places.
	We thefore build the disjunction of the input requirements of the transitions according the arcs and weights of $ N $.
	As an example, the guard of transition $ tl $ is
	\begin{align*}
		\gamma(tl) &= x_{11} = th_0 \land x_{12} = fo_0 \land x_1 = hl_0 \hspace{0.7cm} \text{(firing mode $ tl_0 $)} \\
				   &\lor x_{11} = th_1 \land x_{12} = fo_1 \land x_1 = hl_1 \hspace{0.7cm} \text{(firing mode $ tl_1 $)} \\
				   &\lor x_{11} = th_2 \land x_{12} = fo_2 \land x_1 = hl_2 \hspace{0.7cm} \text{(firing mode $ tl_2 $)} \\
				   &\lor x_{11} = th_3 \land x_{12} = fo_3 \land x_1 = hl_3 \hspace{0.7cm} \text{(firing mode $ tl_3 $)} \\
				   &\lor x_{11} = th_4 \land x_{12} = fo_4 \land x_1 = hl_4 \hspace{0.7cm} \text{(firing mode $ tl_4 $)}
	\end{align*}

	Figure \ref{fig:philos} presents the coloured net, which results from the described folding.
	The coloured net can subsequently be decoloured to a skeleton.

	\begin{figure}[ht]
	\begin{center}
	\resizebox{0.7\textwidth}{!}{
	\begin{tikzpicture}
		[scale=0.5,
		 place/.style={circle,draw=black,thick,inner sep=10pt},
		 transition/.style={rectangle,draw=black,thick,inner sep=10pt},
		 token/.style={shape=circle,draw=black,fill=white,minimum size=2mm,inner sep=0.2pt},
		 -|/.style={to path={-| (\tikztotarget)}},
		 |-/.style={to path={|- (\tikztotarget)}}
		]
		\node[place,label=below:$ fo $] at(16,-4) (fo) {};
			\node[token] at(16,-3.35) {\scalebox{.47}{fo0}};
			\node[token] at(16.65,-4) {\scalebox{.47}{fo1}};
			\node[token] at(16,-4.65) {\scalebox{.47}{fo2}};
			\node[token] at(15.35,-4) {\scalebox{.47}{fo3}};
			\node[token] at(16,-4) {\scalebox{.47}{fo4}};

		\node[place,label=above:$ thea $] at(16,4) (thea) {};
			\node[token] at(16,4.65) {\scalebox{.47}{th0}};
			\node[token] at(16.65,4) {\scalebox{.47}{th1}};
			\node[token] at(16,3.35) {\scalebox{.47}{th2}};
			\node[token] at(15.35,4) {\scalebox{.47}{th3}};
			\node[token] at(16,4) {\scalebox{.47}{th4}};
			
		\node[place,label=above:$ hl $] at(8,0) (hl) {};
		\node[place,label=above:$ hr $] at(24,0) (hr) {};

		\node[transition] at (4,0) (tl) {$ tl $}
			edge[pre,|-] (thea)
			edge[pre,|-] (fo)
			edge[post] node[auto] {\footnotesize $ x_1 $} (hl);

		\node[transition] at (12,0) (tr) {$ tr $}
			edge[pre] node[auto,swap] {\footnotesize $ x_2 $} (hl)
			edge[pre,|-] (fo.150)
			edge[post,|-] (thea.210)
			;
		
		\node[transition] at (20,0) (rl) {$ rl $}
			edge[pre,|-] (thea.330)
			edge[post] node[auto] {\footnotesize $ x_5 $} (hr)
			edge[post,|-](fo.30)
			;

		\node[transition] at (28,0) (rr) {$ rr $}
			edge[pre] node[auto,swap] {\footnotesize $ x_7 $} (hr)
			edge[post,|-](thea)
			edge[post,|-](fo.330)
			;

		\node at(8,4.5) (xthtl) {\footnotesize $x_{11} $};

		\node at(11.5,2.3) (xthtl) {\footnotesize $x_3 $};
		\node at(11.5,-2.3) (xthtl) {\footnotesize $x_8 $};

		\node at(20.5,2.3) (xthtl) {\footnotesize $x_4 $};
		\node at(20.5,-2.3) (xrlfo) {\footnotesize $x_6 $};

		\node at(24,4.5) (xrrth) {\footnotesize $x_9 $};
		\node at(24,-4) (xrrfo) {\footnotesize $x_{10} $};

		\node at(8,-3.5) (xfotl) {\footnotesize $x_{12} $};
		
	\end{tikzpicture}}
	\end{center}\vspace*{-3mm}
	\caption{A coloured net version of the five dining philosophers.}
	\label{fig:philos}
\end{figure}
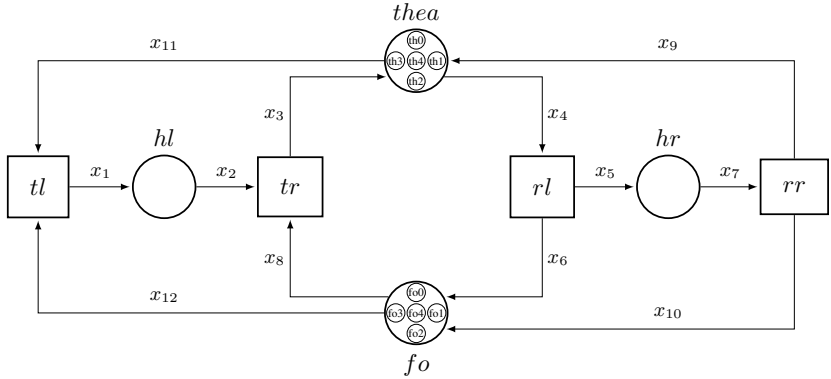\vspace*{-1mm}
\end{example}

\subsection{Checking fullness for a P/T net}

The resulting coloured net does not necessarily have full minimal transition classes (cf.~Sec.~~\ref{subsec:findlow}), thus it does not have a deadlock-preserving skeleton.
Due to the process for deriving the folded coloured net, the guards do not permit the approach outlined there for checking whether or not a transition class is full.
We may, however, approach that criterion differently.
The idea is to check the criterion right after the folding procedure, just as we have the final partitioning of the P/T nodes.
If we cannot prove the deadlock preservation at this point, we can abort the skeletal analysis of this net, as we don't expect useful results.

\begin{theorem}(Deadlock Preservation for P/T Nets)
	Let $ N $ be a P/T net and $ C_{\N} $ its folding.
	Let $ [t] = \{t_1,t_2,\dots,t_k\} $ be a minimal transition class of $ C_{\N} $, where each transition $ t_j $ has $ s_j $ firing modes for $ j \in \{1,\dots,k\} $.
	Let $ p_1,p_2,\dots,p_\ell $ be the pre-places of $ [t] $, with the colour domains $ \chi(p_i) $ for $ i \in \{1,\dots,\ell\} $.
	Every pre-place $ p_i $ is connected to every transition $ t_j $ of $ [t] $ by an arc with the weight $ w_{ij} $.
	The folding $ C_{\N} $ resp. the underlying P/T net $ N $ has a deadlock-preserving skeleton, if $ \prod_{i=1}^\ell \binom{\abs{\chi(p_i)}}{w_{ij}}  = \sum_{j=1}^k s_j $ for every minimal transition class of $ C_{\N} $.
\end{theorem}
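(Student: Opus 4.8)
\emph{Proof plan.} The strategy is to reduce the claim to Findlow's criterion, i.e.\ to the lemma ``Deadlock-Preserving Skeleton'', which states that $C_{\N}$ has a deadlock-preserving skeleton iff every minimal transition class of $C_{\N}$ is full. It therefore suffices to prove, for an arbitrary minimal transition class $[t]=\{t_1,\dots,t_k\}$ with pre-places $p_1,\dots,p_\ell$, that $[t]$ is full iff $\prod_{i=1}^{\ell}\binom{\abs{\chi(p_i)}}{w_{ij}}=\sum_{j=1}^{k}s_j$; in particular this establishes the stated ``if'' direction. Note first that, since $C_{\N}$ is uniform, all transitions in $[t]$ have the same input vector, so $w_{ij}=:w_i$ does not depend on $j$, and $p_1,\dots,p_\ell$ are precisely the pre-places common to all $t_j$.

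Next I would turn the fullness condition of Definition~\ref{def:full} into a counting statement. The markings $m_{\C}$ over which that definition quantifies for $[t]$ are exactly the tuples $(m_1,\dots,m_\ell)$ in which $m_i$ places exactly $w_i$ tokens on $p_i$ over the colour domain $\chi(p_i)$; using the shape of the guards produced in Figure~\ref{alg:folding}, a firing mode of $[t]$ draws tokens of pairwise distinct colours from each $p_i$, so these tuples are in bijection with the $\ell$-tuples of $w_i$-element subsets of the $\chi(p_i)$, of which there are $\prod_{i=1}^{\ell}\binom{\abs{\chi(p_i)}}{w_i}$. On the other side, the guard construction in Figure~\ref{alg:folding} associates with each low-level transition $\tau$ folded into some $t_j$ the firing mode that consumes from $p_i$ exactly the colour set $\{\,p\in\chi(p_i)\mid W(p,\tau)>0\,\}$, again of size $w_i$. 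For a candidate marking $m$ with $\abs{m(p_i)}=w_i$ for all $i$, such a firing mode is enabled in $m$ iff the multiset it consumes is dominated by $m(p_i)$ on every $p_i$; since both have total mass $w_i$, this holds iff they are equal. Hence $[t]$ is full precisely when the map $\Phi$ that sends each firing mode of $[t]$ to the $\ell$-tuple of colour sets it consumes is \emph{surjective} onto the set of candidate markings.

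I would then finish by cardinality counting. The domain of $\Phi$ has size $\sum_{j=1}^{k}s_j$ and its codomain has size $\prod_{i=1}^{\ell}\binom{\abs{\chi(p_i)}}{w_i}$. Provided $\Phi$ is \emph{injective} — distinct firing modes of a minimal transition class consume distinct colour tuples from the pre-places — an injection between finite sets is surjective iff domain and codomain have equal cardinality, which is exactly the asserted equation. Carrying this equivalence over all minimal transition classes and invoking Findlow's criterion completes the argument.

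The hard part is justifying the injectivity of $\Phi$: one has to argue that the partition refinement of Figure~\ref{alg:folding} never groups into one minimal transition class two low-level transitions whose presets coincide on all pre-place classes of $[t]$; equivalently, that ``$t_j$ has $s_j$ firing modes'' may be read as ``$t_j$ has $s_j$ distinct input behaviours''. Settling this requires a careful look at the transition-refinement steps and at how minimality with respect to the class order interacts with them. A smaller side point is the passage from multisets to sets used above — that a folded firing mode consumes at most one token of each colour from $p_i$ — which is what turns the a priori multiset count into the ordinary binomial coefficient $\binom{\abs{\chi(p_i)}}{w_{ij}}$ occurring in the statement.
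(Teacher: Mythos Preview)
Your approach is essentially the same as the paper's: both reduce to Findlow's criterion, count the candidate token distributions on the pre-places as $\prod_{i=1}^{\ell}\binom{\abs{\chi(p_i)}}{w_i}$, count the firing modes of the class as $\sum_{j=1}^{k}s_j$, and conclude fullness from equality of the two counts. The paper likewise flags your ``hard part'' (injectivity of $\Phi$) not as something it proves but as an assumption, remarking at the end that ``the firing modes $s_j$ need to be all different from each other, resp.\ all of the P/T transitions need to have different presets''; and it leaves your multiset-versus-set side point implicit by simply writing the ordinary binomial coefficient. So your proposal matches the paper's argument, while being more explicit about the places where an assumption is being made rather than discharged.
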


\begin{proof}
	The folding $ C_{\N} $ has a deadlock-preserving skeleton, if all of its minimal transition classes are full.
	A minimal transition class $ [t] = \{t_1,t_2,\dots,t_k\}$ is full, if for every marking $ m_{C_{\N}} $ with $ \abs{m_{C_{\N}}(p_i)} = f_i(t) $ for $ i \in \{1,\dots,\ell\} $, there is one transition in $ [t] $, for which the marking is a firing mode.
	This is expressed by the equation $ \prod_{i=1}^\ell \binom{\abs{\chi(p_i)}}{w_{ij}}  = \sum_{j=1}^k s_j $.
	The binomial coefficient $ \binom{\abs{\chi(p_i)}}{w_{ij}} $ gives the number of sufficient tokensubsets of $ \chi(p_i) $ for one pre-place $ p_i $ of $ [t] $, where $ i \in \{1,\dots,\ell\} $ and $ j \in \{1,\dots,k\} $.
	Multiplying these numbers for every pre-place $ p_i $ for $ i \in \{1,\dots,\ell\} $, leads to the total number of sufficient combinations of tokens, i.e the number of possible markings $ m_{C_{\N}} $ with sufficient input requirements $ \abs{m(p_i)} = f_i(t) $ with $ i \in \{1,\dots,\ell\} $ for $ [t] $.
	Each of the transitions $ t_j $ in $ [t] $ for $ j \in \{1,\dots,k\} $ has $ s_j $ firing modes, thus in $ [t] $, we have $ \sum_{j=1}^k s_j $ firing modes overall.
	If $ \prod_{i=1}^\ell \binom{\abs{\chi(p_i)}}{w_{ij}}  = \sum_{j=1}^k s_j $, for every sufficient combination of the coloured tokens, there is a firing mode of one transition in $ [t]$.
	If $ \prod_{i=1}^\ell \binom{\abs{\chi(p_i)}}{w_{ij}}  = \sum_{j=1}^k s_j $, the minimal transition class $ [t] $ is full, thus if the equation holds for every minimal transition class, $ C_{\N} $ has a deadlock-preseving skeleton.
	Transferring this to $ N $, for every combination of tokens of the P/T pre-places (represented by $ \chi(p_i) $), there is one P/T transition related to $ [t] $ enabled (representend by $ s_j $).
	So, if a marking of $ N $ fits with regard to the cardinality, there must be one activated P/T transition if the equation holds.
	Then, $ N $ is has a deadlock-preserving skeleton.
	It it important to mention, that the firing modes $ s_j $ need to be all different from each other, resp. all of the P/T transitions need to have different presets.
	Otherwise the equality of combinations and firing modes will not hold, although every combination activates a transition.
\end{proof}

This equation is sufficient for the fullness of $ [t] $, but it is not necessary.
\newline If $ \prod_{i=1}^l \binom{\abs{\chi(p_i)}}{w_{ij}}  > \sum_{j=1}^k s_j $, which means there is a combination of tokens which does not activate a transition, these too many combinations might be unreachable, thus are not in need of an activated transition.
If the equation holds for every minimal transition class we know that $ C_{\N} $ will have a deadlock-preserving skeleton and the method of skeletal abstraction can be applied to $ N $ and all its $ ACTL^* $ formulas.

\section{Experimental results}
\label{sec:experiments}

We conducted our experiments on the benchmark provided by the Model Checking Contest (MCC) 2019 \cite{mcc19}. On that page, the reader may find a detailed specification
of the machine ``tajo'' that was used to execute the experiments. The benchmark comprises 1018 nets (193 coloured nets and 825 P/T nets). For the majority of colored nets, their
unfolding is among the P/T nets of the benchmark, too.
We covered the three categories Reachability, CTL, and LTL where the skeleton approach makes sense. For every net and category, there are 16 formulas with place-based atomic propositions
and 16 formulas with transition-based propositions. That makes a total of 97,728 formulas. If a P/T net is the unfolding of a coloured net, some but not all formulas
of the P/T net accord with formulas used for the coloured net.

In every single run, we allowed 4 cores, 30 minutes, and 16 MB of RAM for the verification of a group of 16 formulas.
The runs used the full portfolio \cite{portfolio} of verification methods available in our tool LoLA \cite{lola}, now including the skeleton approach. For the skeleton, we applied the same search based model checking routines as for the unfolded net, and the state equation approach \cite{wimmelwolf}.
In our approach, the skeleton is directly derived from the PNML description of a coloured net, so the
skeleton related verification tasks start before the unfolding of the net is generated  in parallel (and only then the remaining verification routines are launched). If the input is a P/T net, we first launch the
verification tasks for the given net, before trying to fold the net (in parallel to the already running routines). We launch skeleton related tasks only if the size of the skeleton is less than one third of the size of the given P/T net. This way, we avoid situations where the skeleton is too close to the given net. Since folding depends on the formula, we
have to execute up to 16 individual folding procedures per run. We do not fold a net if the formula is trivial (i.e.~does not contain temporal operators). Trivial formulas are
mostly the result of sophisticated application of logical tautologies and preprocessing based on linear programming \cite{tapaalrewriting}. We also stop the folding procedure as soon as some
other portfolio member has determined the value of the formula.

For the 79,200 formulas for P/T nets, 66,546 skeletons were created. Of the remaining 12,654 formulas, 11,086 contain no temporal operators, so no folding was launched.
For the remaining 1,568 formulas, some
other portfolio member may have delivered a result before folding completed.  Folding took at most 287 seconds, with an average of half a second. Generation of the skeleton for a coloured net takes
no time at all as it appears as an intermediate step of the unfolding process. Generating the skeleton naturally succeeded for all coloured nets. It also succeeded whenever both net and formula were derived from a
coloured net. There are a few other cases where the skeleton could be generated. Although more nets have a regular, foldable structure, formulas, if not derived from coloured nets, are
generated randomly, so they tend to break symmetry more frequently than in practical situations. On the other hand, the formula syntax for coloured nets in the MCC does not permit
references to individual colours  or firing modes, so the skeleton approach is applicable more frequently than in practice.  Since there are more P/T nets than coloured nets in the contest,
results obtained for the MCC benchmark should be a lower bound for the performance to be observed in practice.

The 66,546 skeletons include those that are considered to be too large to make a difference compared to the given net. After ruling them out, 34,906 formulas have useful skeletons, including
coloured and P/T nets. In 15,315 cases, we launched the skeleton related tasks. In the remaining 19,591 cases, the formula (nor its negation) are not in $ ACTL^*$, or none of the criteria
discussed in the paper would certify preservation of the formula. We need to mention here that deadlock injection has not been implemented so far.

Of the 15,315 formulas where we launched the skeleton related tasks, they were the first (among the whole portfolio) to deliver results in 3702 cases.
The remaining 12,147 formulas include those where some other portfolio member responded earlier, or where the skeleton approach evaluated its $ ACTL^* $
query to false (so the value is not inherited by the unfolded net).

Among the 97,728 formulas considered, there have been 7,768 formulas none of the participants in the MCC 2019 could solve. With the skeleton
approach, we have now been able to solve 248 of these particularly involved problems. These include but are not restricted to nets that have a prohibitively large unfolding.

Given that we run the skeleton approach as part of a powerful portfolio,
with LoLA being a competitive participant in the MCC, we may conclude that the skeleton approach nicely complements the existing portfolio.

\section{Conclusion} \label{sec:conclusion}
With our contribution, we turned the skeleton approach into an executable and useful member of a verification portfolio. We investigated the gap between the concepts of
net morphisms and simulation relations and proposed algorithms for checking the required criteria. Through folding, we extended the approach to P/T nets. Experiments underpin
the usefulness of the approach.

Future work may include the implementation of deadlock injection.
Furthermore, we may enhance the approach to the full transition classes.
First, we may try to use place invariants to rule out certain token distributions in the pre-set of transition classes thus being able to certify more transition classes as full.
Second, we may try to split places and transitions in the skeleton turning non-full transition classes into full ones.
Furthermore, we may determine a larger number of $ LTL $ safety properties by the analysis of the respective Büchi automaton.
So far, we assume all $ LTL $ properties which do not use $ \mathbf{X},\mathbf{F} $ and $ \mathbf{U} $ as safety properties.
Libraries like \cite{spot} are able to identify more $ LTL $ properties as safety properties.
This might extend the use of the skeleton approach.


\begin{thebibliography}{10}
\providecommand{\url}[1]{\texttt{#1}}
\providecommand{\urlprefix}{URL }
\expandafter\ifx\csname urlstyle\endcsname\relax
  \providecommand{\doi}[1]{doi:\discretionary{}{}{}#1}\else
  \providecommand{\doi}{doi:\discretionary{}{}{}\begingroup
  \urlstyle{rm}\Url}\fi
\providecommand{\eprint}[2][]{\url{#2}}

\bibitem{desel91}
Desel J.
\newblock On abstractions of nets.
\newblock In: Advances in {Petri} {Nets} 1991. Springer-Verlag,
  Berlin/ Heidelberg, 1991 pp. 78--92. doi:10.1007/BFb0019970.

\bibitem{padberg98}
Padbergx J, Gajewsky M, Ermel C.
\newblock Rule-based refinement of high-level nets preserving safety properties.
\newblock In: Proc.~FASE, volume 1382, Springer, Berlin, Heidelberg, 1998 pp. 221--238.
doi:10.1007/BFb0053593.

\bibitem{milner89}
Milner R.
\newblock Communication and Concurrency.
\newblock Prentice {Hall} international series in computer science. Prentice
  Hall, New York, 1989. ISBN:978-0-13-114984-7.

\bibitem{findlow92}
Findlow G.
\newblock Obtaining deadlock-preserving skeletons for coloured nets.
\newblock In: Application and {Theory} of {Petri} {Nets}. Springer, Berlin, Heidelberg, 1992 pp. 173--192.
doi:10.1007/3-540-55676-1\_10.

Download citation.

\bibitem{mcc19}
Kordon F, Garavel H, Hillah LM, Hulin-Hubard F, Amparore E, Beccuti M,
  Berthomieu B, Ciardo G, {Dal Zilio} S, Liebke T, Li S, Meijer J, Miner A,
  Srba J, Thierry-Mieg Y, van~de Pol J, van Dirk T, Wolf K.
\newblock {Complete Results for the 2019 Edition of the Model Checking
  Contest}.
\newblock {http://mcc.lip6.fr/2019/results.php}, 2019.

\bibitem{ourskeleton}
Wallner S, Wolf K.
\newblock Skeleton Abstraction for Universal Temporal Properties.
\newblock In: Buchs D, Carmona J (eds.), Application and Theory of {P}etri Nets
  and Concurrency - 42nd International Conference, {PETRI} {NETS} 2021, Virtual
  Event, June 23-25, 2021, Proceedings, volume 12734 of \emph{Lecture Notes in
  Computer Science}. Springer, 2021 pp. 186--207.  doi:10.48550/arXiv.2112.08884.

\bibitem{vautherin87}
Vautherin J.
\newblock Parallel systems specifications with coloured {Petri} nets and
  algebraic specifications.
\newblock In: Advances in {Petri} {Nets}. Springer, Berlin, Heidelberg, 1987 pp. 293--308.
doi:10.1007/3-540-18086-9\_31.

\bibitem{rusttackenboeke01}
Rust C, B\"oke JTC.
\newblock Pr/{T}-{Net} {Based} {Seamless} {Design} of {Embedded} {Real}-{Time} {Systems}.
\newblock In: Applications and {Theory} of {Petri} {Nets},
  Springer Berlin Heidelberg, Berlin, Heidelberg, 2001 pp. 343--362.  doi:10.1007/3-540-45740-2\_20.

\bibitem{lilius95}
Lilius J.
\newblock On the folding of algebraic nets.
\newblock Helsinki University of Technology, 1995.

\bibitem{silva99}
Sliva V, Murataxx T, Shatz S.
\newblock {Protocol} {Specification} {Design} {Using} {an} {Object}-{Based}
  {Petri} {Net} {Formalism}.
\newblock \emph{Int.~Journal of Software Engineering and Knowledge
  Engineering}, 1999.
\newblock \textbf{09}(01):97--125.  doi: 10.1142/S0218194099000073.

\bibitem{jensen09}
Jensen K, Kristensen L.
\newblock Coloured {Petri} {Nets}.
\newblock Springer Berlin Heidelberg, Berlin, Heidelberg, 2009.

\bibitem{pinna11}
Pinna G.
\newblock How {Much} {Is} {Worth} to {Remember}? {A} {Taxonomy} {Based} on
  {Petri} {Nets} {Unfoldings}.
\newblock In: Applications and {Theory} of {Petri} {Nets}, volume 6709,
Springer Berlin Heidelberg, Berlin, Heidelberg, 2011 pp. 109--128.
  doi:10.1007/978-3-642-21834-7\_7.

\bibitem{clarke86}
Clarke E, Emerson E, Sistla A.
\newblock Automatic verification of finite-state concurrent systems using
  temporal logic specifications.
\newblock \emph{ACM Transactions on Programming Languages and Systems}, 1986.
\newblock \textbf{8}(2):244--263.

\bibitem{grumberg94}
Grumberg O, Long D.
\newblock Model checking and modular verification.
\newblock \emph{ACM Transactions on Programming Languages and Systems}, 1994.
\newblock \textbf{16}(3):843--871.  doi:10.1145/177492.177725.

\bibitem{penczekszretergerthkuiper00}
Penczek W, Szreter M, Gerth R, Kuiper R.
\newblock Improving {Partial} {Order} {Reductions} for {Universal} {Branching}
  {Time} {Properties}.
\newblock \emph{Fund.~Inf.}, 2000.  \textbf{43}(1-4):245--267.
doi:10.3233/FI-2000-43123413.

\bibitem{katzgrumberggeist99}
Katz S, Grumberg O, Geist D.
\newblock "Have {I} written enough Properties?" - {A} Method of Comparison
  between Specification and Implementation.
\newblock In: Proc.~{CHARME}. 1999 pp. 280--297.  doi:10.1007/3-540-48153-2\_21.

\bibitem{iddunfolding}
Schwarick M, Rohr C, Liu F, Assaf G, xChodak J, Heiner M.
\newblock Efficient Unfolding of Coloured {P}etri Nets Using Interval Decision
  Diagrams.
\newblock In: Proc.~{PETRI} {NETS}, volume 12152 of \emph{LNCS}. 2020 pp.
  324--344.
\newblock \doi{10.1007/978-3-030-51831-8\_16}.

\bibitem{stochwellformed}
Chiola G, Dutheillet C, Franceschinis G, Haddad S.
\newblock Stochastic well-formed colored nets and symmetric modeling  applications.
\newblock \emph{IEEE Transactions on Computers}, 1993.
\newblock \textbf{42}(11):1343--1360.  doi:10.1109/ 12.247838.

\bibitem{pnml}
PNML Standard.
\newblock \url{https://www.pnml.org/index.php}.
\newblock Accessed: 2021-12-15.

\bibitem{mcc2021}
Kordon F, Bouvier P, Garavel H, Hillah LM, Hulin-Hubard F, Amat N, Amparore E,
  Berthomieu B, Biswal S, Donatelli D, Galla F, , {Dal Zilio} S, Jensen P, He
  C, {Le Botlan} D, Li S, , Srba J, Thierry-Mieg, Walner A, Wolf K.
\newblock {Complete Results for the 2020 Edition of the Model Checking
  Contest}.
\newblock {http://mcc.lip6.fr/2021/results.php}, 2021.

\bibitem{portfolio}
Wolf K.
\newblock Portfolio Management in Explicit Model Checking.
\newblock In: Proc.~PNSE, volume 2651 of \emph{{CEUR} Workshop Proceedings}.
  2020 pp. 10--28.
\newblock \urlprefix\url{http://ceur-ws.org/Vol-2651/paper2.pdf}.

\bibitem{lola}
Wolf K.
\newblock {P}etri Net Model Checking with LoLA 2.
\newblock In: Proc.~{PETRI} {NETS}. 2018 pp. 351--362.
doi: 10.1007/978-3-319-91268-4\_18.

\bibitem{wimmelwolf}
Wimmel H, Wolf K.
\newblock Applying {CEGAR} to the {P}etri Net State Equation.
\newblock \emph{Log. Methods Comput. Sci.}, 2012.
\newblock \textbf{8}(3).
doi:10.2168/LMCS-8(3:27)2012.

\bibitem{tapaalrewriting}
B{\o}nneland F, Dyhr J, Jensen PG, Johannsen M, Srba J.
\newblock Simplification of {CTL} Formulae for Efficient Model Checking of
  {P}etri Nets.
\newblock In: Proc.~{PETRI} {NETS}, volume 10877 of \emph{LNCS}. 2018 pp.
  143--163.
\newblock \doi{10.1007/978-3-319-91268-4\_8}.

\bibitem{spot}
Duret-Lutz A.
\newblock {LTL} Translation Improvements in {S}pot 1.0.
\newblock \emph{International Journal on Critical Computer-Based Systems},
  2014.  \textbf{5}(1-2):31--54. doi:10.1504/IJCCBS.2014.059594.
\end{thebibliography}
\end{document}